\DeclareFontFamily{U}{stmry}{}
\DeclareFontShape{U}{stmry}{m}{n}
   {  <5> <6> <7> <8> <9> <9.5> <10> gen * stmary
      <10.95><12><14.4><17.28><20.74><24.88>stmary10
   }{}
\theoremstyle{plain}
   \newtheorem{example}{Example}
\def\subexample#1{\expandafter\ifx\csname c@ex@#1\endcsname\relax
  \newcounter{ex@#1}%
  \newcounter{subex@#1}%
  \stepcounter{example}
  \setcounter{ex@#1}{\c@example}%
  \setcounter{subex@#1}{1}%
  \let\save@theexample\theexample
  \@ifundefined{theHexample}\relax
    {\let\save@theHexample\theHexample}
  \else
  \stepcounter{subex@#1}%
  \fi
  \addtocounter{example}{-1}%
  \def\theexample{\arabic{ex@#1}.\alph{subex@#1}}%
  \@ifundefined{theHexample}\relax
    {\def\theHexample{\arabic{ex@#1}.\alph{subex@#1}}}
  \begin{example}}
\def\endsubexample{\end{example}
  \let\theexample\save@theexample
  \@ifundefined{theHexample}\relax
    {\let\theHexample\save@theHexample}}
\newif\ifNM@ux
\def\NM@tuple#1,#2,#3,#4\@NMendtuple{\def\NM@rgtwo{#2}%
  \def\NM@ux{#3}\def\NM@@ux{\relax}
  #1
  \ifx\NM@rgtwo\NM@@ux\else,\ifx\NM@ux\NM@@ux\else
      \ifNM@ux\NM@uxfalse\else\penalty-1000\fi\fi 
    \NM@tuple #2,#3,#4\@NMendtuple
  \fi}
\def\tuple#1{\NM@uxtrue\langle \NM@tuple 
  #1,\relax,\relax,\relax\@NMendtuple \penalty10000\rangle}
    \newcommand\trim@spaces[1]{%
     \romannumeral-`\q\trim@trim@\noexpand#1Q Q%
    }
    \long\def\trim@trim@#1 Q{\trim@trim@@#1Q}
    \long\def\trim@trim@@#1Q#2{#1}}
\def\@mdef@az{a-z}
\def\@mdef@AZ{A-Z}
\def\@mdef@alphabet{a,b,c,d,e,f,g,h,i,j,k,l,m,n,o,p,q,r,s,t,u,v,w,x,y,z}
\def\@mdef@Alphabet{A,B,C,D,E,F,G,H,I,J,K,L,M,N,O,P,Q,R,S,T,U,V,W,X,Y,Z}
\newcommand\multidef[3][]{%
  \setkeys{mdef}{#1}%
  \def\@mdef@com##1{#2}%
  \@mdef#3,\@end}
\def\@mdef #1,#2\@end{%
  \edef\@mdef@arg{\trim@spaces{#1}}%
  \ifx\@mdef@arg\@mdef@az
    \expandafter\@mdef \@mdef@alphabet,\@end
  \else
    \ifx\@mdef@arg\@mdef@AZ
      \expandafter\@mdef \@mdef@Alphabet,\@end
    \else
      \expandafter\@@mdef\@mdef@arg->->->\@end
    \fi
  \fi
  \def\@mdef@arg{#2}%
  \ifx\@mdef@arg\@empty\else\@mdef #2\@end\fi}
\newtoks\@mdef@redeftok
\def\@mdef@comma{}
\def\@@mdef#1->#2->#3\@end{%
  \@ifundefined{\@mdprefix#1\@mdsuffix}
    {\@@@mdef{#1}{#2}}
    {\ifKV@mdef@nowarn\else
       \edef\@mdef@redef{\the\@mdef@redeftok\@mdef@comma
         \@backslashchar\@mdprefix#1\@mdsuffix}
       \def\@mdef@comma{, }
       \global\@mdef@redeftok=\expandafter{\@mdef@redef}
     \fi
     \ifKV@mdef@noerr
       \@@@mdef{#1}{#2}%
       \ifKV@mdef@nowarn\else
         \PackageWarning{multidef}
           {command \expandafter\noexpand\csname\@mdprefix#1\@mdsuffix\endcsname
             redefined}
       \fi
     \else
       \PackageError{multidef}
         {command \expandafter\noexpand\csname\@mdprefix#1\@mdsuffix\endcsname
           already defined}\@ehc
     \fi
     \ifKV@mdef@nowarn\else
       \@ifundefined{@mdwarnonce}
         {\def\@mdwarnonce{}%
          \@mdef@finalwarn}
         {}
     \fi}
}
\def\@mdef@finalwarn{%
  \AtEndDocument{\PackageWarningNoLine{multidef}{There were
     redefined commands (\the\@mdef@redeftok)}}}
\def\@@@mdef#1#2{\def\@arg@{#2}%
  \ifx\@arg@\@empty
    \ifKV@mdef@robust
      \expandafter\def\expandafter\@mdef@cmdname
        \expandafter{\csname\@mdprefix#1\@mdsuffix\endcsname}%
      \expandafter\@mdef@robdef\@mdef@cmdname{#1}%
    \else
      \@mdef@def{#1}{#1}%
    \fi
  \else
    \ifKV@mdef@robust
      \expandafter\def\expandafter\@mdef@cmdname
        \expandafter{\csname\@mdprefix#1\@mdsuffix\endcsname}
      \expandafter\@mdef@robdef\@mdef@cmdname{#2}%
    \else
      \@mdef@def{#1}{#2}%
    \fi
  \fi}
\def\@mdef@def#1#2{%
  \let\reserved@b\@gobble
  \ifKV@mdef@global\let\@mdglobal\global\else\let\@mdglobal\relax\fi
  \ifKV@mdef@long\let\@mdlong\long\else\let\@mdlong\relax\fi
  \def\l@ngrel@x{\@mdlong\@mdglobal}
  \expandafter\expandafter\expandafter\@yargd@f\expandafter\@mdargs\csname
  \@mdprefix#1\@mdsuffix\expandafter\endcsname\expandafter{\@mdef@com{#2}}
}
\def\@mdef@robdef#1#2{%
  \edef\reserved@a{\string#1}%
  \def\reserved@b{#1}%
  \edef\reserved@b{\expandafter\strip@prefix\meaning\reserved@b}%
  \global\edef#1{%
     \ifx\reserved@a\reserved@b
        \noexpand\x@protect
        \noexpand#1%
     \fi
     \noexpand\protect
     \expandafter\noexpand\csname
        \expandafter\@gobble\string#1 \endcsname
  }%
  \let\reserved@b\@gobble
  \ifKV@mdef@global\let\@mdglobal\global\else\let\@mdglobal\relax\fi
  \ifKV@mdef@long\let\@mdlong\long\else\let\@mdlong\relax\fi
  \def\l@ngrel@x{\@mdlong\@mdglobal}
  \expandafter\expandafter\expandafter\@yargd@f\expandafter\@mdargs\csname
    \expandafter\@gobble\string#1 \expandafter\endcsname
    \expandafter{\@mdef@com{#2}}
}
\newcommand\optbb[1]{%
  \@ifnextchar+{\ensuremath{\mathbb{#1}_{\geq 0}}\afterassignment\xspace\let\next=}
               {\@ifnextchar*{\ensuremath{\mathbb{#1}_{>0}}\afterassignment\xspace\let\next=}
                             {\ensuremath{\mathbb{#1}}\xspace}}}
\def\ComplexityFont#1{\ensuremath{\mathsf{#1}}\xspace}
\newcommand\EXPTIME[1][]{\bgroup\def\@aux{#1}%
  \ifx\@aux\@empty\egroup\else\egroup\ensuremath{\mathsf{\mathit{#1}}}\hbox{-}\fi\@EXPTIME}
\newcommand\EXPSPACE[1][]{\def\@aux{#1}%
  \ifx\@aux\@empty\else\ensuremath{\mathsf{\mathit{#1}}}\hbox{-}\fi\@EXPSPACE}
\tikzset{rn/.style={circle,draw, node distance=2cm}}
\tikzset{final/.style={rectangle,draw, node distance=2cm}}
\tikzset{annotate/.style={node distance=0.9cm}}
\tikzset{entry/.style={initial by arrow, initial text=, initial where=above}}
\colorlet{drouge}{red}
\colorlet{frouge}{red!20!white}
\colorlet{dbleu}{blue}
\colorlet{fbleu}{blue!30!white}
\colorlet{fbleuc}{blue!20!white}
\colorlet{dviolet}{blue!50!red}
\colorlet{fviolet}{blue!50!red!40!white}
\colorlet{dvert}{green!80!black}
\colorlet{fvert}{green!80!black!20!white}
\colorlet{djaune}{yellow!80!black}
\colorlet{fjaune}{yellow!80!black!20!white}
\colorlet{dgris}{white!60!black}
\colorlet{fgris}{white!90!black}
\colorlet{dgrisf}{white!30!black}
\colorlet{fgrisf}{white!70!black}
\tikzstyle{ptrond}=[draw,circle,minimum height=2mm]
\tikzstyle{ptcarre}=[draw,minimum width=3mm,minimum height=3mm]
\tikzstyle{moyrond}=[draw,circle,minimum height=5mm]
\tikzstyle{moycarre}=[draw,minimum width=4mm,minimum height=4mm]
\tikzstyle{rond}=[draw,circle,minimum height=7mm]
\tikzstyle{carre}=[draw,minimum width=6mm,minimum height=6mm]
\tikzstyle{rouge}=[draw=drouge,fill=frouge]
\tikzstyle{vert}=[draw=dvert,fill=fvert]
\tikzstyle{jaune}=[draw=djaune,fill=fjaune]
\tikzstyle{bleu}=[draw=dbleu,fill=fbleu]
\tikzstyle{violet}=[draw=dviolet,fill=fviolet]
\tikzstyle{gris}=[draw=dgris,fill=fgris]
\tikzstyle{grisf}=[draw=dgrisf,fill=fgrisf]
\tikzstyle{rvert}=[style=rond,style=vert]
\tikzstyle{rrouge}=[style=rond,style=rouge]
\newcommand\upcl[1][]{\def\@tempa{#1}\ifx\@tempa\@empty\relax
  \ensuremath{\mathord{\uparrow}}\else
  \ensuremath{\mathord{\uparrow_{#1}}}\fi}
\newcommand\support[2][]{\ensuremath{\overline{\vphantom{#1}#2}\xspace}}
\newcommand{\pr}{\mathbb{P}}
\newcommand{\Pre}{\mathrm{Pre}}
\def\pinit{\ensuremath{p_{\textsf{init}}\xspace}}
\def\psink{\ensuremath{p_{\textsf{sink}}\xspace}}
\def\qhalt{\ensuremath{q_{\textsf{halt}}\xspace}}
\let\indic\bbone
\let\Prot\calP
\let\Sys\calS
\let\SG\calG
\let\Mcup\oplus
\let\Msetminus\ominus
\let\Msubseteq\sqsubseteq
\def\Msize#1{|#1|}
\newcommand{\Prob}{\mathit{Pr}}
\let\Confs\Gamma
\let\SubConfs\Delta
\newcommand{\ConfsF}{\llbracket q_f \rrbracket}
\let\aconf\gamma
\newcommand{\Objective}[1]{\llbracket \Diamond #1 \rrbracket}
\newcommand{\state}[1]{st(#1)}
\newcommand{\data}[1]{data(#1)}
\newcommand{\set}[1]{\{#1\}}
\newcommand{\Post}{\mathrm{Post}}
\author[1]{Patricia~Bouyer}
\author[1]{Nicolas~Markey}
\author[2]{Mickael~Randour\thanks{F.R.S.-FNRS Postdoctoral Researcher.}}
\author[3]{Arnaud~Sangnier}
\author[1]{Daniel~Stan}
\authorrunning{Patricia~Bouyer, Nicolas~Markey, Mickael Randour, Arnaud Sangnier and Daniel~Stan}
\affil[1]{LSV -- CNRS, ENS Cachan \& University Paris-Saclay -- France}
\affil[2]{Computer Science Department -- Universit\'e Libre de Bruxelles (ULB) --
  Belgium} 
\affil[3]{IRIF -- University Paris Diderot \& CNRS -- France}
\title{Reachability in Networks of Register Protocols under Stochastic
  Schedulers\footnote{This paper extends the conference version presented in~\cite{BMR+16b}.}\footnote{This work has been partly supported by ERC Starting grant EQualIS
(FP7-308087), by European FET project Cassting (FP7-601148), and  by
the  ANR  research  program PACS (ANR-14-CE28-0002).}} 
\titlerunning{Reachability in Networks of Register Protocols under Stochastic
  Schedulers}
\begin{document}
\hfuzz=0pt
\maketitle

\begin{abstract}
  We study the almost-sure reachability problem in a distributed system
  obtained as the asynchronous composition of \(N\) copies (called
  processes) of the same automaton (called protocol), that can
  communicate via a shared register with finite domain. The automaton
  has two types of transitions: write-transitions update the value of
  the register, while read-transitions move to a new state depending on
  the content of the register. Non-determinism is resolved by a
  stochastic scheduler. Given a protocol, we focus on almost-sure
  reachability of a target state by one of the processes. The answer to
  this problem naturally depends on the number~\(N\) of processes. However,
  we prove that our setting has a cut-off property: the~answer to the
  almost-sure reachability problem is constant when \(N\) is large enough;
  we~then develop an \EXPSPACE algorithm deciding whether this constant
  answer is positive or negative.
\end{abstract}

\section{Introduction}

\subparagraph{Verification of systems with many identical processes.} 
It is a classical pattern in distributed systems to have a large number of
identical components running concurrently (a.k.a. networks of
processes). In~order to verify the correctness of such systems, a~naive option
consists in fixing an upper bound on the number of processes, and applying
classical verification techniques on the resulting system. This has several
drawbacks, and in particular it gives no information whatsoever about larger
systems. Another option is to use parameterized-verification techniques, taking as a
parameter the number of copies of the protocol in the system being considered.
In~such a setting, the natural question is to find and characterize the
set of parameter values for which the system is 
correct. Not~only the latter approach is more general, but it might also turn out to be
easier and more efficient, since it involves orthogonal techniques.

\subparagraph{Different means of communication lead to different models.}
A~seminal paper on parameterized verification of such distributed systems is
the work of German and Sistla~\cite{GS92}. In~this work, the authors consider
networks of processes all following the same finite-state automaton; the
communication between processes is performed thanks to \emph{rendez-vous}
communication. Various related settings have been proposed and studied since then, 
which mainly differ by the way the processes communicate. Among those,
let us mention broadcast communication~\cite{EFM99,DSZ10},
token-passing~\cite{CTTV04,AJKR14}, message passing~\cite{BGS14},
shared register with ring topologies~\cite{ABG15}, or shared
memory~\cite{EGM13}. In his nice survey on such parameterized models~\cite{Esp14}, Esparza shows that minor changes in the setting, such as the presence of a
controller in the system, might drastically change the complexity 
of the verification problems. The~relative expressiveness of some of those
models has been studied recently in~\cite{ARZ15}, yielding several reductions of
the verification problems for some of those classes of models.

\subparagraph{Asynchronous shared-memory systems.} 
We consider a communication model where the processes 
asynchronously access a shared 
register,  and where read and write operations on this register are performed
non-atomically. A~similar model has been proposed by Hague
in~\cite{Hag11}, 
where the behavior of processes is defined by a pushdown automaton. The~complexity of some reachability and liveness problems for shared-memory models have then been
established in~\cite{EGM13} and~\cite{DEGM15}, respectively. These works consider networks in which a specific process, called the
leader, runs a different program, and address the problem whether, for 
some number of processes, the leader can satisfy a given reachability or
liveness property. In~the case where there is no leader, and where processes are
finite-state, the parameterized control-state reachability problem
(asking whether one of the processes can reach a given control state) can be
solved in polynomial time, by adapting the approach of~\cite{DSTZ12} for lossy
broadcast protocols.

\subparagraph{Fairness  and cut-off properties.} 
\looseness=-1
In this work, we further insert fairness assumptions in the model of
parameterized networks with asynchronous shared memory, 
and consider reachability problems in this setting.
There are different ways to include fairness in
parameterized models. 
One~approach is to enforce fairness expressed as a temporal-logic properties
on the executions (e.g., any action that is available infinitely often must be
performed infinitely often); this~is the option chosen for parameterized
networks with rendez-vous~\cite{GS92} and for systems with disjunctive guards
(where processes can query the states of other processes)
in~\cite{AJK16}. 
We~follow another choice, by equipping our networks with a stochastic
scheduler that, at~each step of the execution, 
assigns the same probability to the available actions of all the processes. From~a high-level perspective, both forms of fairness are
  similar.  However, expressing
  fairness via temporal logic allows for very regular patterns (e.g.,
  round-robin execution of the processes), whereas the stochastic approach
  leads to consider all possible interleavings with probability~$1$. Under this stochastic scheduler assumption, we~focus on almost-sure reachability of a given control
state by any of the processes of the system.
More specifically, as~in~\cite{AJK16}, we~are interested in determining the
existence of a \emph{cut-off}, i.e.,~an~integer~$k$ such that networks with
more than $k$ processes almost-surely reach the target state. Deciding the existence and computing such cut-offs is important for at~least two
aspects: first, it~ensures that the system is 
correct for arbitrarily large networks; second, if we~are able to derive a bound
on the cut-off, then using classical verification techniques we can find the
exact value of the cut-off and exactly characterize the sizes of the networks
for which the behavior is correct.

\subparagraph{Our contributions.}  
We prove that for finite-state asynchronous shared-memory protocols with a
stochastic scheduler, and for almost-sure reachability of some control state
by some process of the network, there always exists a positive or negative
cut-off; positive cut-offs are those above which the target state is reached
with probability~$1$, while negative cut-offs are those above which the target
state is reached with probability strictly less than~$1$. Notice that both
cut-offs are not complement of one another, so that our result is not trivial. 

We~then prove that the ``sign'' (positive or negative) of a cut-off can be
decided in \EXPSPACE, and that this problem is \PSPACE-hard.
Finally, we~provide lower and upper bounds on the values of the cut-offs,
exhibiting in particular protocols with exponential (negative) cut-off.
Notice how these results contrast with classical results in related areas: in
the absence of fairness, reachability can be decided in polynomial time, and
in most settings, when cut-offs exist, they generally have polynomial
size~\cite{AJK16,EN03,EK00}.

\section{Presentation of the model and of the considered problem}

\subsection{Preliminaries.}

Let $S$ be a finite set. A~multiset over~$S$ is a mapping $\mu\colon S \to
\bbN$. 
The~cardinality of a multiset~$\mu$ is $\Msize\mu=\sum_{s\in S}\mu(s)$.
The~support~$\support\mu$ of~$\mu$ is the~subset $\nu\subseteq S$ s.t.
for all~$s\in S$, it~holds $s\in \nu$ if, and only~if, $\mu(s)>0$.
For~$k\in\bbN$, we~write $\bbN^S_k$ for the set of multisets of
cardinality~$k$ over~$S$,
and~$\bbN^S$ for the set of all multisets over~$S$.
For any~$s\in S$ and~$k\in\bbN$, we~write $s^k$ for the multiset where
$s^k(s)=k$ and $s^k(s')=0$ for all $s'\not=s$.
We~may write $s$ instead of~$s^1$ when no ambiguity may arise.
A~multiset~$\mu$ is included in a multiset~$\mu'$, written
$\mu\Msubseteq \mu'$, if $\mu(s)\leq
\mu'(s)$ for all~$s\in S$.
Given two multisets~$\mu$ and~$\mu'$, their union $\mu\Mcup\mu'$ is still a
multiset~s.t. $(\mu\Mcup\mu')(s)=\mu(s)  + \mu'(s)$ for all~$s\in S$.
Assuming $\mu\Msubseteq\mu'$, the difference $\mu'\Msetminus\mu$ is still a
multiset~s.t. $(\mu'\Msetminus\mu)(s)=\mu'(s)-\mu(s)$.

\smallskip

A~quasi-order $\tuple{A,\preceq}$ is a \emph{well quasi-order} (wqo for short)
if for every infinite sequence of elements $a_1,a_2,\ldots$ in~$A$, there
exist two indices $i<j$ such that $a_i \preceq a_j$. For~instance, for $n>0$,
$\tuple{\bbN^n,\leq}$ (with lexicographic order) is a~wqo. Given a set~$A$
with an ordering~$\preceq$ and a subset $B \subseteq A$, the~set~$B$ is said
to be \emph{upward closed} in~$A$ if for all $a_1 \in B$ and $a_2 \in A$, in
case $a_1 \preceq a_2$, then ${a_2 \in B}$. The~\emph{upward-closure} of a
set~$B$ (for~the ordering~$\preceq$), denoted by $\upcl[\preceq](B)$
(or~sometimes $\upcl(B)$ when the ordering is clear from the context), is
the set $\set{a \in A \mid \exists b \in B \mbox{ s.t. } b \preceq a}$. If
$\tuple{A,\preceq}$ is a wqo and $B$~is an upward closed set in~$A$, there
exists a finite set of minimal elements $\set{b_1,\ldots,b_k}$ such that
$B=\upcl{\set{b_1,\ldots,b_k}}$.

\subsection{Register protocols and associated distributed system.}
We focus on systems that are defined as the (asynchronous)
product of several copies of the same protocol. Each copy communicates
with the others through a single register that can store values from a
finite alphabet.

\begin{definition}
    A \emph{register protocol} is given by
    $\Prot = \tuple{Q, D, q_0, T}$, where 
    $Q$ is a finite set of control locations, 
    $D$ is a finite alphabet of data for the shared register, 
    $q_0\in Q$ is an initial location, 
    $T\subseteq Q\times \{R,W\}\times D\times Q$
            is the set of transitions of the protocol.  Here $R$ means \emph{read} the content of the shared register,
            while $W$ means \emph{write} in the register.

    In~order to avoid deadlocks, it~is required that each location has
    at least one outgoing transition. We also require that whenever some $R$-transition $(q,R,d',q')$
    appears in~$T$, then for all~$d\in D$, there exists at least
    one~$q_d\in Q$ such that $(q,R,d,q_d)\in T$. The size of the protocol $\Prot$ is
given by $|Q| + |T|$.
\end{definition}

\begin{subexample}{running}\label{ex1}
Figure~\ref{fig-ex-protocol} displays a small register protocol with four locations, over
an alphabet of data~$D=\{0,1,2\}$. In this figure (and in the sequel), omitted
$R$-transitions (e.g.,~transitions~$R(1)$ and~$R(2)$ from~$q_0$) are assumed to
be self-loops. When the register contains~$0$, this protocol may move from
initial location~$q_0$ to location~$q_1$. From~there it can write~$1$ in the
register, and then move to~$q_2$. From~$q_2$, as long as the register
contains~$1$, the process can either stay in~$q_2$ (with the omitted
self-loop~$R(1)$), or write~$2$ in the register and jump back to~$q_1$. It~is
easily seen that if this process executes alone, it~cannot reach state~$q_f$.
\begin{figure}[t]
\centering
\begin{tikzpicture}
\draw (0,0) node[rond,bleu] (a) {$q_0$};
\draw (2,0) node[rond,vert] (b) {$q_1$};
\draw (4,0) node[rond,jaune] (c) {$q_2$};
\draw (6,0) node[rond,rouge] (d) {$q_f$};
\everymath{\scriptstyle}
\draw[latex'-] (a.180) -- +(180:3mm); 
\draw (a) edge[-latex'] node[below] {$R(0)$} (b);
\draw (b) edge[-latex',out=60,in=120,looseness=6] node[pos=.75,left] {$W(1)$} (b);
\draw (b) edge[-latex',bend right] node[above] {$R(1)$} (c);
\draw (c) edge[-latex',bend right] node[above] {$W(2)$} (b);
\draw (c) edge[-latex'] node[below] {$R(2)$} (d);
\draw (d) edge[-latex',out=60,in=120,looseness=6] node[pos=.75,left] {$W(2)$} (d);
\end{tikzpicture}
\caption{Example of a register protocol with $D=\{0,1,2\}$.}
\label{fig-ex-protocol}
\end{figure}
\end{subexample}

We now present the semantics of distributed systems associated with our
register protocols. We~consider the
\emph{asynchronous} composition of several copies of the protocol (the~number
of copies is not fixed a~priori and can be seen as a parameter). We~are
interested in the behavior of such a composition under a fair scheduler. Such
distributed systems involve two sources of non-determinism: first, register
protocols may be non-deterministic; second, in any configuration, all
protocols have at least one available transition, and non-determinism arises
from the asynchronous semantics. In~the semantics associated with a register
protocol, non-determinism will be solved by a randomized scheduler, whose role
is to select at each step which process will perform a transition, and which
transition it will perform among the available ones. Because we will
consider qualitative objectives (almost-sure reachability), the exact probability distributions will not really matter, and we will
pick the uniform one (arbitrary choice).  Note that we assume non-atomic read\slash write operations on the
register, as in~\cite{Hag11,EGM13,DEGM15}.
More precisely, when one process performs a transition, then all the processes
that are in the same state are allowed to also perform the same
transition just after, in fact write are always possible, and if a
process performs a read of a specific value, since this read does not
alter the value of the register, all processes in the same state can
perform the same read (until one process performs a write). We will see later that dropping this hypothesis has a
consequence on our results. We now give the formal definition of such a system.

The configurations of the distributed system built
on register protocol~$\Prot=\tuple{Q, D, q_0, T}$ belong to the set $\Confs=\bbN^Q \times D$. The~first component of a configuration is
a multiset characterizing the number of processes in each state
of~$Q$, whereas the second component provides the content of the register. For~a
configuration~$\aconf=\tuple{\mu,d}$, we~denote by~$\state{\aconf}$ the
multiset~$\mu$ in~$\bbN^Q$ and by~$\data{\aconf}$ the data~$d$ in~$D$.
We~overload the operators defined over multisets; in particular,
for a multiset~$\delta$ over~$Q$, we~write $\aconf\Mcup\delta$ for the
configuration $\tuple{\mu\Mcup\delta,d}$. Similarly, we~write~$\support\aconf$
for the support of~$\state{\gamma}$. 

A~configuration $\aconf'=\tuple{\mu',d'}$ is a \emph{successor} of a
configuration $\aconf=\tuple{\mu,d}$ if, and only~if, there is a transition
$(q,\textsf{op},d'',q')\in T$ such that $\mu(q)>0$, $\mu'=\mu\Msetminus q\Mcup
q'$ and either $\textsf{op}=R$ and $d=d'=d''$, or $\textsf{op}=W$ and
$d'=d''$. In~that case, we~write $\aconf \rightarrow \aconf'$. Note that since
$\mu(q)>0$ and $\mu'=\mu\Msetminus q\Mcup q'$, we have necessarily
$\Msize\mu=\Msize{\mu'}$. In~our system, we~assume that there is no
creation or deletion of processes during an execution, hence
the~size of configurations (i.e.,~$\Msize{\state{\aconf}}$)
remains constant along transitions. We~write $\Gamma_k$ for the set of
configurations of size~$k$.
For any configuration~$\aconf\in\Gamma_k$, we~denote by~$\Post(\aconf)\subseteq
\Gamma_k$ the set of successors  of~$\aconf$, and point out that such a set is
finite and non-empty.

Now, the \emph{distributed system} $\Sys_\Prot$ associated with a register
protocol~$\Prot$ is a discrete-time Markov chain $\tuple{\Confs,\Prob}$ where $\Prob\colon \Gamma \times \Gamma \to [0,1]$ is
the transition probability matrix defined as follows: for~all~$\aconf$
and~$\aconf'\in \Confs$,
we have $\Prob(\aconf,\aconf')=\frac{1}{|\Post(\aconf)|}$ if
$\aconf \rightarrow \aconf'$, and $\Prob(\aconf,\aconf')=0$ otherwise. Note
that $\Prob$ is well defined: by~the restriction imposed on the transition
relation~$T$ of the protocol, we~have
$0<|\Post(\aconf)| < \infty$ for all configuration~$\aconf$, and hence
we also get $\Sigma_{\aconf' \in \Confs}\Prob(\aconf,\aconf')=1$.
For~a fixed integer~$k$, we~define the distributed system of size~$k$
associated with~$\Prot$ as the finite-state discrete-time Markov chain
$\Sys_\Prot^k=\tuple{\Gamma_k,\Pr_k}$, where $\Pr_k$ is the restriction of~$\Pr$
to~$\Gamma_k\times\Gamma_k$. 

\smallskip 

  We are interested in analyzing the behavior of the distributed
  system for a large number of participants. More precisely, we are interested in determining
  whether almost-sure reachability of a specific control state holds
  when the number of processes involved is large. We are therefore
  seeking a \emph{cut-off} property, which we formalize in the
  following.

  A~finite path in the
  system $\Sys_\Prot$ is a finite sequence of configurations $\aconf_0
  \rightarrow \aconf_1 \ldots \rightarrow \aconf_k$. In~such a case, we~say
  that the path starts in~$\aconf_0$ and ends in~$\aconf_k$. We~furthermore
  write $\aconf \rightarrow^\ast \aconf'$ if, and only~if, there exists a path
  that starts in~$\aconf$ and ends in~$\aconf'$. Given a location~$q_f$, we
  denote by $\Objective{q_f}$ the set of paths of the form $\aconf_0
  \rightarrow \aconf_1 \ldots \rightarrow \aconf_k$ for which there is $i \in
  [0;k]$ such that $\state{\aconf_i}(q_f)>0$. Given a configuration~$\aconf$,
  we~denote by $\pr(\aconf,\Objective{q_f})$ the probability that some paths
  starting in $\aconf$ belong to $\Objective{q_f}$ in~$\Sys_\Prot$. This
  probability is well-defined since the set of such paths is measurable (see
  e.g.,~\cite{PoMC2008-BK}). Given a register protocol~$\Prot=\tuple{Q, D, q_0,  T}$, an~initial
register value~$d_0$, and a target location~$q_f\in Q$,  we  say that
$q_f$ is almost-surely reachable for $k$ processes if
$\pr(\tuple{q^k_0,d_0},\Objective{q_f})=1$.

\begin{subexample}{running}
Consider again the protocol depicted in Fig.~\ref{fig-ex-protocol}, with
initial register content~$0$. As~we explained already, for~$k=1$, the final
state is not reachable at~all, for any scheduler (here as~$k=1$, the~scheduler
only has to solve non-determinism in the protocol).

When~$k=2$, one easily sees that the final state is reachable: it~suffices
that both processes go to~$q_2$ together, from where one process may write
value~$2$ in the register, which the other process can read and go to~$q_f$.
Notice that this does not ensure that $q_f$ is reachable almost-surely for
this~$k$ (and actually, it~is~not; see~Example~\ref{ex-1c}).
\end{subexample}

We aim here at finding cut-offs for almost-sure reachability, i.e., we seek
the existence of a threshold such that almost-sure reachability
(or its negation) holds for all larger values.

\begin{definition}
  Fix a protocol~$\Prot=\tuple{Q, D, q_0, T}$, $d_0\in D$, and $q_f\in Q$.
  An integer~$k\in\bbN$ is a \emph{cut-off for almost-sure
    reachability} (shortly a \emph{cut-off}) for~$\Prot$, $d_0$ and $q_f$
  if one of the following two properties holds:
  \begin{itemize}
    \item for all $h\geq k$, we have $\pr(\tuple{q^h_0,d_0},\Objective{q_f})=1$. In~this case $k$ is a \emph{positive}
        cut-off;
    \item for all $h\geq k$, we have
      $\pr(\tuple{q^h_0,d_0},\Objective{q_f})< 1$. Then $k$~is a \emph{negative} cut-off.
  \end{itemize}
An~integer~$k$ is a \emph{tight} cut-off if it is a cut-off and $k-1$ is~not.
\end{definition}

\looseness=-1
Notice that from the definition, cut-offs need not exist for a given distributed
system.
Our~main result precisely states that cut-offs always exist, and that we
can decide their nature.
\begin{theorem}\label{thm-main}
For any protocol~$\Prot$, any initial register value~$d_0$ and any target
location~$q_f$, there always exists a cut-off for almost-sure
    reachability, whose value is at most
doubly-exponential in the size of~$\calP$. Whether it is a
positive or a negative cut-off can be decided in \EXPSPACE, and
is \PSPACE-hard.
\end{theorem}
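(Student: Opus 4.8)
The plan is to prove three separate assertions: existence of a cut-off together with the doubly-exponential bound, an \EXPSPACE decision procedure for its sign, and \PSPACE-hardness.

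\subparagraph*{Existence and monotonicity.} The first thing I would establish is a \emph{monotonicity} lemma: adding more processes can only help reach $q_f$. Concretely, I would show that if $\pr(\tuple{q_0^k,d_0},\Objective{q_f})=1$ then the same holds for $k+1$ (and more generally for all $h\geq k$). The intuition is that the extra process can be made to mirror the behaviour of an existing process --- this is exactly where the non-atomic read/write semantics (``all processes in the same state can perform the same transition'') is needed, since a copy can shadow another copy step by step without disturbing the register contents. If monotonicity of the \emph{positive} answer holds, then the set $\{k : \pr=1\}$ is upward closed, so either it is empty (then every $k$ is a negative cut-off, take $k=0$) or it has a least element, which is a positive (tight) cut-off. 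I would be careful here: the theorem's definition does not require the two cases to be complementary, so I only need monotonicity in \emph{one} direction to get existence. The doubly-exponential bound on the value is the quantitative refinement: I would argue that whether $\pr=1$ depends only on the reachable part of the finite Markov chain $\Sys_\Prot^k$, and that a well-quasi-ordering / abstraction argument on multiset configurations (using the wqo $\tuple{\bbN^Q,\leq}$ recalled in the preliminaries) collapses the behaviour once $k$ exceeds a bound governed by $|Q|$ and $|D|$; the double exponential comes from counting the relevant upward-closed sets of configurations.

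\subparagraph*{Deciding the sign in \EXPSPACE.} For a fixed $k$, $\pr(\tuple{q_0^k,d_0},\Objective{q_f})=1$ is a qualitative (almost-sure reachability) property of the finite Markov chain $\Sys_\Prot^k$, so it is equivalent to a purely \emph{topological} condition: from every configuration reachable from $\tuple{q_0^k,d_0}$, some configuration in $\ConfsF$ (a state with a process on $q_f$) is reachable. The plan is to decide this on the \emph{parameterized} system rather than a fixed $k$. I would compute, by a $\Pre$-style backward fixed-point over upward-closed sets of configurations --- legitimate because $\Post$ and the relevant predecessor operator preserve upward closure under $\Msubseteq$ and the wqo guarantees termination --- the set of configurations from which $\ConfsF$ is \emph{not} almost-surely reachable, i.e.\ from which a ``bad'' bottom strongly connected component avoiding $q_f$ is reachable. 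Using the existence result, it suffices to evaluate this characterization at the cut-off value; since that value is at most doubly exponential, its binary encoding is exponential, and a configuration's multiset can be stored in exponential space, so the whole reachability/avoidance analysis runs in \EXPSPACE. The main work is to phrase both ``$\ConfsF$ is reachable'' and ``a fair/almost-sure trap exists'' as operations on finitely represented upward-closed sets and to bound the number of iterations.

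\subparagraph*{\PSPACE-hardness and the main obstacle.} For the lower bound I would reduce from a known \PSPACE-complete problem --- the canonical choice is the reachability/emptiness problem for the intersection of a family of finite automata, or equivalently an iterated-reachability problem whose state is an exponential-size counter --- encoding the polynomially many components into the single register protocol and using the processes to simulate the product configuration, so that $q_f$ becomes almost-surely reachable iff the \PSPACE\ instance is positive. I expect the genuine obstacle to be the \EXPSPACE\ upper-bound argument, specifically proving that the cut-off is at most doubly exponential \emph{and} that almost-sure reachability stabilises at that bound: one must show simultaneously that the qualitative answer is eventually constant (monotonicity) and that no pathological configuration forces the threshold higher, which requires a careful wqo/abstraction argument coupling the register value, the support of the multiset, and the counts of processes in each location. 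The interplay between the almost-sure (rather than merely possible) semantics and the non-atomic copying of transitions is where the delicate reasoning lies, and I would isolate it as a separate structural lemma before assembling the three parts.
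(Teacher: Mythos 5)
Your existence argument rests on a monotonicity claim that is false: it is \emph{not} true that $\pr(\tuple{q_0^k,d_0},\Objective{q_f})=1$ implies $\pr(\tuple{q_0^{k+1},d_0},\Objective{q_f})=1$. The paper's own exponential-counter protocol (Fig.~\ref{fig-expo}) is a counterexample: there the target is reached almost surely for every number of processes strictly below $n+2^n$, yet with probability strictly less than~$1$ for every larger number, so the set $\set{k \mid \pr(\tuple{q_0^k,d_0},\Objective{q_f})=1}$ is emphatically not upward closed. The copycat/shadowing intuition does not give you what you want because almost-sure reachability is equivalent (Lemma~\ref{lem:prob-pre-post}) to $\Post^\ast(\set{\tuple{q_0^k,d_0}})\subseteq\Pre^\ast(\ConfsF)$; the right-hand side is indeed upward closed (extra processes can idle), but the left-hand side \emph{grows} with $k$ --- an extra process can perform writes that unlock configurations unreachable with $k$ processes, and some of these newly reachable configurations may be dead. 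So adding processes can hurt, and your case distinction ``the set is empty or has a least element'' does not yield a cut-off. The paper instead represents both $\Post^\ast(\upcl\set{\tuple{q_0,d_0}})$ and $\Pre^\ast(\ConfsF)$ by their finitely many minimal elements (via the wqo of Lemma~\ref{lem:conf-wqo}) and proves a dichotomy on a combinatorial relation between these two generating sets (``inclusion modulo single-state incrementation'', Lemmas~\ref{lem:negative-cut-off} and~\ref{lem:positive-cut-off}): whichever way that condition goes, an explicit threshold above which the answer is constant can be extracted. This is the missing idea, and without it the rest of your plan --- in particular ``evaluating the characterization at the cut-off value'' --- has nothing to stand on.

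The other two parts are sketchier but closer in spirit. Deciding the sign via finitely represented sets is workable, though the paper proceeds forward rather than backward: it builds a symbolic graph of index $K\cdot|Q|$ (exact tracking of $K\cdot|Q|$ processes plus the support of the rest), bounds $K$ by Rackoff's coverability bound for VAS, and reduces the question to plain graph reachability on a doubly-exponential object. For the lower bound, the paper reduces from acceptance of a linear-bounded Turing machine; the non-obvious ingredient your sketch omits is the filter gadget $\calF_{n+1}$, which is needed to \emph{count} how many processes actually took part in the simulation and thereby to distinguish a faithful one-process-per-cell simulation from a corrupted one.
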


\begin{remark}
 When dropping the condition on non-atomic read\slash write operations,
 and allowing transitions with atomic read\slash 
write operations (i.e.,~one~process is ensured to perform a read and a
write operation without to be interrupted by another process), the existence of a cut-off
(Theorem~\ref{thm-main}) is not ensured. This is demonstrated with the
protocol of Fig.~\ref{fig-ex-nocutoff}: one easily checks (e.g.,~inductively on
the number of processes, since processes that end up in~$q_2$ play no role
anymore) that state~$q_f$ is reached with probability~$1$ if, and only~if, the
number of processes is odd.
\end{remark}

\section{Properties of register protocols}

\subsection{Example of a register protocol}
\label{subsec:example}
\begin{figure}[tb]
\begin{minipage}[t]{.34\linewidth}
\centering
\begin{tikzpicture}
\path[use as bounding box] (-2.15,.4) -- (2.4,-1.9);
\draw (0,0) node[rond,bleu] (a) {$q_0$};
\draw (0,-1.5) node[rond,jaune] (b) {$q_1$};
\draw (2,-.75) node[rond,rouge] (c) {$q_2$};
\draw (-1.75,0) node[rond,vert] (d) {$q_f$};
\draw[latex'-] (a.135) -- +(135:3mm);

\everymath{\scriptstyle}
\draw (a) edge[-latex', bend right=20] node[left=-1pt] {$\genfrac{}{}{0pt}{1}{R(0)}{W(1)}$} (b);
\draw (b) edge[-latex', bend right=20] node[right=-1pt] {$\genfrac{}{}{0pt}{1}{R(1)}{W(0)}$} (a);
\draw (a) edge[-latex', bend left=0] node[pos=.2,above right=-2pt] {$\genfrac{}{}{0pt}{1}{R(1)}{W(2)}$} (c);
\draw (b) edge[-latex'] node[pos=.2,below right=-2pt] {$\genfrac{}{}{0pt}{1}{R(2)}{W(0)}$} (c);
\draw (a) edge[-latex'] node[above] {$R(0)$} (d);

\draw (c) edge[-latex',out=-60,in=-120,looseness=6] (c);
\draw (d) edge[-latex',out=-60,in=-120,looseness=6] (d);
\end{tikzpicture}
\caption{Example of a register protocol with atomic read\slash write
  operations.}
\label{fig-ex-nocutoff}
\end{minipage}\hfill
\begin{minipage}[t]{.6\linewidth}
\hfill
 \begin{tikzpicture}[node distance=1.4cm]
\path[use as bounding box] (-.4,1.15) -- (7.7,-1.15);
   \path (0,0) node[rond,bleu] (q0) {$s_0$};
   \draw[latex'-] (q0.-135) -- +(-135:3mm);
   \node[rond, right of=q0,bleu] (q1) {$s_1$};
   \node[rond, right of=q1,bleu] (q2) {$s_2$};
   \node[right of=q2,node distance=1.4cm] (qmid) {$\dots$};
   \node[rond, right of=qmid, bleu,inner sep=0pt,node distance=1.4cm] (qn1) {$s_{n-1}$};
   \node[rond, right of=qn1, accepting,vert] (qn) {$s_n$};
   \everymath{\scriptstyle}
   \path (q0) edge[-latex', out=-60,in=-120,looseness=7] node[below,pos=.5] {$W(0)$} (q0)
              edge[-latex', bend right=30] node[below=-1pt] {$R(0)$} (q1); 
   \path (q1) edge[-latex', bend right=30] node[below=-1pt] {$W(1)$} (q0)
              edge[-latex', bend right=30] node[below] {$R(1)$} (q2);
   \path (q2) edge[-latex', bend right, in=-120] node[above,pos=.2] {$W(2)$} (q0)
              edge[dashed, out=-30,in=180] node[below,pos=.8] {$R(2)$} +(.8,-.3);
   \path ($(qn1)+(-.8,-.3)$) edge[-latex', dashed,in=-150,out=0] node[below,pos=.2] {$R(n-2)$} (qn1);
   \path (qn1) edge[-latex', bend right=30] node[below] {$R(n-1)$} (qn);
   \path (qn1.160) edge[dashed] node[pos=.9,above right=-1pt] {$W(n-1)$} +(160:1.2cm);
   \path (q0) edge[dashed,latex'-,out=80,in=-160] +(50:1.5cm);
 \end{tikzpicture}
 \caption{A ``filter'' protocol $\calF_{n}$ for $n>0$.}
 \label{filtern}\label{fig-filtern}
\end{minipage}
\end{figure}

\looseness=-1
We illustrate our model with a family of register
protocols~$\left(\calF_n\right)_{n>0}$, depicted in Fig.~\ref{fig-filtern}.
For~a fixed~$n$, protocol~$\calF_n$ has $n+1$ states and $n$ different
data; intuitively, in~order to move from~$s_i$ to~$s_{i+1}$, two processes
are needed: one writes~$i$ in the register and goes back to~$s_0$, and the
second process can proceed to~$s_{i+1}$ by reading~$i$. 
Since backward transitions to $s_0$ are always possible and since
states can always exit $s_0$ by writing a $0$ and reading it afterwards, no
deadlock can ever occur so the main question remains to determine if $s_n$ is
reachable by one of the processes as we increase the number of initial
processes. As~shown in Lemma~\ref{lem-filter}, the~answer is positive:
$\calF_n$~has a tight linear positive cut-off; it~actually behaves like a ``filter'',
that can test if at least $n$ processes are running together. We~exploit this
property later in Section~\ref{sec-hardness}. 

\begin{lemma}
\label{lem-filter}
 Fix~$n\in\bbN$.
 The ``filter'' protocol $\calF_n$, depicted in Fig.~\ref{filtern}, with
 initial register value~$0$ and target location~$s_n$, has a tight positive
 cut-off~equal to $n$.
\end{lemma}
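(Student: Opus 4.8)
The plan is to establish separately the two conditions defining a tight cut-off: that $h=n$ processes suffice for almost-sure reachability of $s_n$ (so $n$ is a positive cut-off), and that $n-1$ processes cannot reach $s_n$ at all (so $n-1$ is not a cut-off, making $n$ tight). Throughout I call $j$ the \emph{level} of state $s_j$, and for a configuration I write $A_j$ for the number of processes currently in $s_j,\dots,s_n$, i.e.\ at level at least $j$.

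For the positive direction I would first exhibit, for any $h\ge n$, a single finite path from $\langle s_0^h,0\rangle$ reaching a configuration with a process in $s_n$. The mechanism is that reading does not alter the register: from all processes in $s_0$ with register $0$, send every process to $s_1$ by repeated $R(0)$; then one process performs $W(1)$ (returning to $s_0$ and setting the register to $1$), after which \emph{all} remaining processes in $s_1$ may perform $R(1)$ to reach $s_2$ while the register stays $1$. Iterating, climbing to level $i+1$ costs exactly one process demoted to $s_0$, so starting from $h\ge n$ processes leaves at least one to arrive in $s_n$. I would then upgrade reachability to almost-sure reachability by the standard finite-Markov-chain argument: it suffices that $s_n$ be reachable from \emph{every} configuration reachable from $\langle s_0^h,0\rangle$. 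This holds because from any configuration with no process in $s_n$ one sends every process back to $s_0$ (a process in $s_j$ takes $W(j)$, always enabled) and resets the register by $W(0)$, recovering the initial configuration, to which the climbing path applies. Since the target is reachable from every reachable configuration, the probability of avoiding it forever is $0$, i.e.\ $\pr(\langle s_0^h,0\rangle,\Objective{s_n})=1$.

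The hard part is the negative direction: with $k=n-1$ processes $s_n$ is unreachable. I would prove the stronger invariant that in every reachable configuration, (I1) $A_j\le\max(k-j+1,0)$ for all $j\ge1$, and (I2) if the register holds $d\ge1$ then $A_d\le k-d$. It holds initially (all $A_j=0$), and I would check preservation by a case analysis on the transition fired. A write $W(i)$ with $i\ge1$ moves a process from level $i$ to $0$, decreasing $A_1,\dots,A_i$ and setting the register to $i$; as $A_i\le k-i+1$ beforehand, afterwards $A_i\le k-i$, which is exactly (I2) for the new register value. A read $R(i)$ with $i\ge1$ leaves the register at $i$ and $A_i$ unchanged (so (I2) persists), increasing only $A_{i+1}$; the delicate point, and the crux of the whole argument, is that firing $R(i)$ requires a process at level exactly $i$, so $A_{i+1}<A_i$, and since (I2) gives $A_i\le k-i$ we get $A_{i+1}\le k-i-1$ before the read and hence $A_{i+1}\le k-i=k-(i+1)+1$ afterwards, preserving (I1). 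The cases $R(0)$ and $W(0)$ affect only $A_1$ and are immediate, and (I1) also forces the register never to exceed $k$, so (I2) is always consistent.

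Instantiating (I1) at $j=k+1=n$ gives $A_n\le\max(k-n+1,0)=\max(0,0)=0$, so no process ever occupies $s_n$ with only $n-1$ processes; thus $\pr(\langle s_0^{n-1},0\rangle,\Objective{s_n})=0<1$. Combined with the positive direction at $h=n$, this shows $n-1$ is neither a positive nor a negative cut-off while $n$ is a positive cut-off, so $\calF_n$ has tight positive cut-off $n$. I expect the invariant (I2) — and in particular recognising that it is precisely the clause needed to eliminate the off-by-one slack in the $R(i)$ case — to be the main obstacle; everything else is routine bookkeeping.
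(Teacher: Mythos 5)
Your proof is correct and takes essentially the same route as the paper's: your invariants (I1)+(I2) are exactly the complement of the paper's invariant $\sum_{k=0}^{j}\state{\gamma}(s_k)\ge j+\indic_{\data{\gamma}=j+1}$ (since the left-hand side equals $m-A_{j+1}$), the crucial $R(i)$ case is resolved by the same observation (a process must sit at level exactly $i$ and the register clause supplies the missing unit of slack), and the positive direction uses the same climb-one-level-per-sacrificed-process path together with the reset-to-$\tuple{s_0^h,0}$ argument to conclude almost-sure reachability in the finite Markov chain.
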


\newcommand{\op}{\textrm{op}}

\begin{proof}
  We~consider the system~$\Sys_{\calF_n}^m$, made of $m$ copies of~$\calF_n$, with
  initial register value~$0$.
  We first prove that any reachable configuration~$\aconf$ satisfies:
  \[
    \forall j\leq m.\ \sum_{k=0}^{j}\state{\gamma}(s_k) \geq j +
    \indic_{\data{\gamma}=j+1} 
  \]
  The~proof is by induction: 
  the~invariant is satisfied by the initial
  configuration~$\aconf_0=\tuple{s_0^m,0}$.  
  Let us now consider the run
  $\aconf_0 \rightarrow^* \aconf \rightarrow \aconf'$, in which $\aconf$
  satisfies the invariant, and with last transition $(q,\op, d,q')\in T$.
  \begin{itemize}
    \item If $(\op,d)=(R,0)$, then $q=s_0$ and~$q'=s_1$. Along that
      transition, the right-hand-side term is unchanged; so~is the
      left-hand-side term as soon as~$j>0$, so that the inequality is
      preserved for those cases. The case~$j=0$ is trivial.

    \item If $(\op,d)=(R,i)$ with $i>0$, then $q=s_i$ and~$q'=s_{i+1}$.
      we have $\state{\gamma'}=\state{\gamma}\Msetminus s_i \Mcup s_{i+1}$
      and $\data{\gamma'} = \data{\gamma}=i$.
      Again, along this read-transition, the right-hand side term is
      unchanged, while the left-hand-side term is unchanged for all $j\not=i$.

      It~remains to prove the inequality for~$j=i$. We~apply the induction
      hypothesis in~$\aconf$ for~$j=i-1$: since the transition $(q,R,i,q')$
      is available, it~must hold that $\state{\gamma}(s_i)\geq 1$ and
      ${\data{\gamma}=i=j+1}$. Hence 
      $\sum_{k=0}^{i-1} \state{\gamma}(s_k) \geq i-1+1=i$, and 
      $\sum_{k=0}^{i}\state{\gamma}(s_k) \geq {i+1}$.
      This implies $\sum_{k=0}^{i} \state{\gamma'}(s_k) \geq i$.
 
    \item If $(\op,d)=(W,i)$, then $q=s_i$ and $q'=s_0$. Thus
      $\state{\gamma'} = \state{\gamma}\Msetminus s_i\Mcup s_0$.
      For~$j=i-1$, the left-hand-side term of the inequality is increased
      by~$1$, while the right-hand-side one is either unchanged or also increased
      by~$1$. The property is preserved in both cases. 
      For~$j\not=i-1$, the left-hand-side term cannot decrease, while the
      right-hand-side term cannot increase. Hence 
      the invariant is preserved.
  \end{itemize}
  As a consequence, if $m<n$, we have (for $j=m$)
  $\sum_{k=0}^{m}\state{\gamma}(s_k) = m$ for any reachable
  configuration~$\aconf$, so that $\state{\gamma}(s_n) = 0$.

  \smallskip 
  Conversely, if $m\geq n$, from any
  configuration~$\aconf$, it~is possible to reach $\gamma_i =
  \tuple{s_0^{i}\Mcup s_{i+1}^{m-i},i}$ for any $0\leq i < n$:
  \begin{itemize}
    \item for~$i=0$: all processes can go to~$s_0$, then~write~$0$ in the
      register, and all move to~$s_1$:
      $\aconf\rightarrow^*\tuple{s_0^m,d}\xrightarrow{(W,0)}\tuple{s_0^m,0}
        {\xrightarrow{(R,0)}}{}^m\tuple{s_1^m,0}$;
    \item for $1<i<n-1$, assuming $\tuple{s_0^{i}\Mcup s_{i+1}^{m-i},i}$ can
      be reached, one of the processes in~$s_{i+1}$ can write~$i+1$ (going
      back to~$s_0$), and the remaining~$m-i-1$ processes in~$s_{i+1}$ can go
      to~$s_{i+2}$:
        \[\tuple{s_0^is_{i+1}^{m-i}, i}
            \xrightarrow{(W,i+1)}\tuple{s_0^{i+1}s_{i+1}^{m-i-1},i+1}
            \xrightarrow{(R,i+1)}{}^{m-i-1}\tuple{s_0^{i+1}s_{i+2}^{m-i-1},i+1}
        \]
  \end{itemize}
  Thus from any $\gamma\in\Gamma$, configuration~$\gamma_{n-1} =
  \tuple{s_0^{n-1}s_n^{m-n+1},n-1}$ is reachable. Furthermore,
  $\gamma_{n-1}$~contains the final state~$s_n$ since $m\geq n$.

  Hence, we deduce that there is a unique bottom strongly-connected component
  in~$\Sys_{\calF_n}^m$, and that $\gamma_{n-1}$ belongs to~it: this
  configuration is reached with probability $1$ from $\tuple{s_0^m,0}$.
  It~follows that $\pr(\tuple{s_0^m,0},\Objective{s_f}) = 1$.
\end{proof}

\subsection{Basic results}

In this section, we consider a register protocol
$\Prot = \tuple{Q, D, q_0, T}$, its associated distributed system
$\Sys_\Prot=\tuple{\Confs,\Prob}$, an initial register value $d_0\in D$ and a
target state $q_f \in Q$. We define a partial order $\preceq$ over the set~$\Confs$ of configurations as
follows: $\tuple{\mu,d} \preceq \tuple{\mu',d'}$ if, and only~if,  $d=d'$ and
$\support[']{\mu}=\support{\mu'}$ and $\mu\Msubseteq \mu'$. Note that with respect to the classical order
over multisets, we require here that the supports of~$\mu$ and~$\mu'$ be
the same (we add in fact a finite information to hold for the comparison). We~know from Dickson's
lemma that $\tuple{\bbN^Q,\Msubseteq}$ is a wqo and since $Q$, $D$ and
the supports of multisets in $\bbN^Q$ are
finite, we can deduce the following lemma.

\begin{lemma}\label{lem:conf-wqo}
$\tuple{\Confs,\preceq}$ is a wqo.
\end{lemma}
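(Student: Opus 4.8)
The plan is to reduce the claim to Dickson's lemma by a pigeonhole argument on the two finite pieces of data that $\preceq$ pins down, namely the register value and the support of the multiset.

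First I would take an arbitrary infinite sequence $(\gamma_i)_{i\geq 1}$ of configurations, writing $\gamma_i=\tuple{\mu_i,d_i}$ with $\mu_i\in\bbN^Q$ and $d_i\in D$, and associate to each $\gamma_i$ the pair $(d_i,\support{\mu_i})$ formed by its register value and the support of its multiset. Since $D$ is finite and every support is a subset of the finite set~$Q$, there are only finitely many such pairs. By the infinite pigeonhole principle some pair $(d,\nu)$ occurs infinitely often, so I can extract an infinite subsequence $(\gamma_{i_k})_{k\geq 1}$ along which the register value is constantly~$d$ and the support is constantly~$\nu$.

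Next I would apply Dickson's lemma to the multisets along this subsequence. As recorded above, $\tuple{\bbN^Q,\Msubseteq}$ is a wqo, so for the infinite sequence $(\mu_{i_k})_{k}$ in $\bbN^Q$ there exist $k<l$ with $\mu_{i_k}\Msubseteq\mu_{i_l}$. Because these two multisets were drawn from the same pigeonhole class they already share the register value~$d$ and the support~$\nu$; hence the three defining conditions of $\preceq$ — equal data, equal support, and multiset inclusion — all hold, giving $\gamma_{i_k}\preceq\gamma_{i_l}$ with $i_k<i_l$. This exhibits the required comparable pair in the original sequence, so $\tuple{\Confs,\preceq}$ is a wqo.

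I do not expect a genuine obstacle here; the only point deserving care is that the support equality demanded by $\preceq$ is \emph{stronger} than the plain multiset order, which is exactly why the pigeonhole step must precede the appeal to Dickson (restricting to a fixed support class is what makes $\Msubseteq$ coincide with $\preceq$). Equivalently, one could argue structurally: $\Confs$ decomposes as the finite union of the classes indexed by the pairs $(d,\nu)$, each of which is a sub-poset of the wqo $\tuple{\bbN^Q,\Msubseteq}$, and a finite union of wqos is again a wqo — but the direct pigeonhole version above is the cleanest.
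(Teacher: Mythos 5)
Your proof is correct and follows essentially the same route the paper indicates: the paper deduces the lemma from Dickson's lemma for $\tuple{\bbN^Q,\Msubseteq}$ together with the finiteness of $D$ and of the possible supports, which is exactly your pigeonhole-then-Dickson argument spelled out in detail.
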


We will give some properties of register protocols, but
first we introduce some further notations. Given a set of
configuration $\SubConfs \subseteq \Confs$, we~define
$\Pre^\ast(\SubConfs)$ and $\Post^\ast(\SubConfs)$ as follows:
\begin{xalignat*}2
\Pre^\ast(\SubConfs)&=\set{\aconf \in \Confs \mid \exists
    \aconf' \in \SubConfs. \aconf \rightarrow^\ast \aconf'}
&
\Post^\ast(\SubConfs)&=\set{\aconf' \in \Confs \mid \exists
    \aconf \in \SubConfs. \aconf \rightarrow^\ast \aconf'}
\end{xalignat*}

We also define the set~$\ConfsF$ of configurations we aim to
reach as $\set{\aconf \in \Confs \mid
  \state{\aconf}(q_f)>0}$. It~holds that $\aconf \in
\Pre^\ast(\ConfsF)$ if, and only~if, there exists a path in $\Objective{q_f}$
starting in~$\aconf$.  

\looseness=-1
As already mentioned, when $\tuple{\mu,d} \rightarrow
\tuple{\mu',d'}$ in $\Sys_\Prot$, 
the
multisets $\mu$ and~$\mu'$ have the same cardinality. This~implies that given
$k >0$, the set 
$\Post^\ast(\set{\tuple{q^k_0,d_0}})$ is finite (remember that $Q$ and $D$ are
finite). As~a~consequence, for a fixed~$k$, checking
whether $\pr(\tuple{q^k_0,d_0},\Objective{q_f})=1$ can be easily
achieved by analyzing the finite-state discrete-time Markov
chain~$\Sys_\Prot^k$~\cite{PoMC2008-BK}.

\begin{lemma}
\label{lem:prob-pre-post}
Let $k\geq 1$. Then $\pr(\tuple{q^k_0,d_0},\Objective{q_f})=1$
 if, and only~if,
$\Post^\ast(\set{\tuple{q_0^k,d_0}}) \subseteq \Pre^\ast(\ConfsF)$. 
\end{lemma}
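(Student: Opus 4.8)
The plan is to recognize this as the standard characterization of almost-sure reachability in a finite Markov chain, specialized to $\Sys_\Prot^k$. The key enabling fact, already recorded just above the statement, is that $\Post^\ast(\set{\tuple{q^k_0,d_0}})$ is finite, since transitions preserve cardinality and both $Q$ and $D$ are finite; I would write $R$ for this finite set of reachable configurations (note that it is closed under $\Post$). I would also observe that, because size is invariant along transitions, a configuration of size $k$ lies in $\Pre^\ast(\ConfsF)$ exactly when it can reach $\ConfsF\cap\Gamma_k$ inside $\Sys_\Prot^k$, so the condition $R\subseteq\Pre^\ast(\ConfsF)$ says precisely that every reachable configuration can still reach the target.

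For the ``if'' direction, I would assume $R\subseteq\Pre^\ast(\ConfsF)$ and exploit finiteness to extract \emph{uniform} bounds: every $\aconf\in R$ admits a finite path to $\ConfsF$, and since each transition carries positive probability and $R$ is finite, there is a single $N$ bounding the lengths of such paths and a single $p>0$ bounding their probabilities from below. From any configuration of $R$, the probability of reaching $\ConfsF$ within $N$ steps is then at least $p$, so the probability of avoiding $\ConfsF$ over $N$ steps is at most $1-p$. Since $R$ is closed under $\Post$, the same bound applies afresh after each block of $N$ steps; by the Markov property I would conclude that the probability of avoiding $\ConfsF$ during the first $mN$ steps is at most $(1-p)^m$, which tends to $0$. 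Hence $\pr(\tuple{q^k_0,d_0},\Objective{q_f})=1$.

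For the ``only if'' direction I would argue by contraposition. Suppose some $\aconf\in R$ lies outside $\Pre^\ast(\ConfsF)$; then no path from $\aconf$ ever meets $\ConfsF$, so $\pr(\aconf,\Objective{q_f})=0$. As $\aconf$ is reachable from $\tuple{q^k_0,d_0}$, there is a path of positive probability from the initial configuration to $\aconf$; composing it with the behaviour after $\aconf$ yields a positive-probability family of runs that never visit $\ConfsF$, whence $\pr(\tuple{q^k_0,d_0},\Objective{q_f})<1$, contradicting the left-hand side.

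The only delicate point is the uniformity of $N$ and $p$ in the ``if'' direction, and this is exactly where the finiteness of $R$ and its closure under $\Post$ are indispensable: without finiteness the infimum of the path probabilities could be $0$, and the product $(1-p)^m$ would not be guaranteed to vanish. Everything else is routine bookkeeping with the definitions of $\Pre^\ast$, $\Post^\ast$ and $\Objective{q_f}$, and a single appeal to the Markov property.
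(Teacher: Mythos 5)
Your ``if'' direction is sound and is the standard argument: finiteness of $\Post^\ast(\set{\tuple{q_0^k,d_0}})$ (recorded in the paper just before the statement) yields uniform bounds $N$ and $p>0$, and iterating the Markov property gives the $(1-p)^m$ bound. The paper itself writes no proof of this lemma---it appeals to the standard qualitative-reachability analysis of finite Markov chains via the citation to Baier--Katoen---so on this half you are merely supplying details the authors left implicit.

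The ``only if'' direction, however, has a genuine gap. From a reachable $\aconf\notin\Pre^\ast(\ConfsF)$ you pick a finite path $\rho$ from $\tuple{q_0^k,d_0}$ to $\aconf$ and claim that the cone of runs extending $\rho$ never visits $\ConfsF$. That is only guaranteed for the suffix after $\aconf$; the prefix $\rho$ itself may pass through $\ConfsF$ (a~process can enter $q_f$ and later leave it, after which $q_f$ may become unreachable), and in that case every run in your cone does belong to $\Objective{q_f}$, so no contradiction with $\pr(\tuple{q_0^k,d_0},\Objective{q_f})=1$ is obtained. To close the gap you must argue that some configuration outside $\Pre^\ast(\ConfsF)$ is reachable by a path that \emph{avoids} $\ConfsF$---equivalently, compute $\Post^\ast$ in the chain where the configurations of $\ConfsF$ are made absorbing, which is the form in which the characterization appears in the cited textbook. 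This is not mere pedantry: with $\Post^\ast$ taken literally in the unmodified chain, the stated equivalence can fail (consider a single process forced from $q_0$ through $q_f$ into a dead-end location), so some such adjustment, either to your argument or to the reading of the statement, is unavoidable.
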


\looseness=-1
The difficulty here precisely lies in
finding such a~$k$  and in proving that, once we have
  found one correct value for $k$, all larger values are correct as
  well (to get the cut-off property).
Characteristics of register protocols provide us with some tools to solve this
problem. We base our analysis on reasoning on the set of
configurations reachable from initial configurations in
$\upcl\set{\tuple{q_0,d_0}}$  (the~upward closure
of $\set{\tuple{q_0,d_0}}$ w.r.t.~$\preceq$), remember that since the order
$\tuple{\Confs,\preceq}$ requires equality of support for elements to
be comparable, we have that $\upcl\set{\tuple{q_0,d_0}}= \bigcup_{k\geq 1}
\set{\tuple{q_0^k,d_0}}$. We~begin by showing that this set of
reachable configurations and the set of configurations
from which $\ConfsF$ is reachable are both upward-closed. Thanks to
Lemma~\ref{lem:conf-wqo}, they can be represented as upward closures of finite
sets. To~show that $\Post^\ast(\upcl\set{\tuple{q_0,d_0}})$ is upward-closed, we prove
that register protocols enjoy the following monotonicity property. A~similar
property is given in~\cite{DEGM15} and
derives from the non-atomicity of operations.

\begin{lemma}
\label{lem:copycat}
Let $\aconf_1$, $\aconf_2$, and $\aconf'_2$ be configurations in~$\Confs$. If
$\aconf_1 
\rightarrow^\ast \aconf_2$ and $\aconf_2 \preceq \aconf'_2$, then there
exists $\aconf'_1 \in \Confs$ such that $\aconf'_1
\rightarrow^\ast \aconf'_2$ and $\aconf_1 \preceq \aconf'_1$.
\end{lemma}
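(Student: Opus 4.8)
The plan is to prove this ``copycat'' property by exploiting the non-atomicity of read/write operations, reducing the general case to the addition of a single extra process and then iterating.

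First I would unfold the hypothesis $\aconf_2 \preceq \aconf'_2$: it means $\data{\aconf_2} = \data{\aconf'_2}$, $\support{\aconf_2} = \support{\aconf'_2}$, and $\state{\aconf_2} \Msubseteq \state{\aconf'_2}$. Writing $\state{\aconf'_2} = \state{\aconf_2} \Mcup \delta$, the surplus multiset~$\delta$ is therefore supported inside $\support{\aconf_2}$. I~would then argue by induction on $\Msize{\state{\aconf'_2}}-\Msize{\state{\aconf_2}} = \Msize\delta$. The base case $\Msize\delta = 0$ gives $\aconf_2 = \aconf'_2$, and $\aconf'_1 := \aconf_1$ works. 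The whole inductive step rests on the following single-process claim: \emph{if $\aconf_1 \rightarrow^\ast \aconf_2$ and $q \in \support{\aconf_2}$, then there is $p \in \support{\aconf_1}$ with $\aconf_1 \Mcup p \rightarrow^\ast \aconf_2 \Mcup q$.} Granting this, I pick some $q$ with $\delta(q)>0$; since $q \in \support{\aconf_2}$ the claim yields $p \in \support{\aconf_1}$ and a run $\aconf_1 \Mcup p \rightarrow^\ast \aconf_2 \Mcup q$. One checks $\aconf_2 \Mcup q \preceq \aconf'_2$ (same data, same support as $q \in \support{\aconf_2}$, and $\state{\aconf_2\Mcup q}\Mcup(\delta\Msetminus q) = \state{\aconf'_2}$), with a surplus smaller by one; the induction hypothesis then produces $\aconf'_1 \succeq \aconf_1 \Mcup p$ with $\aconf'_1 \rightarrow^\ast \aconf'_2$. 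Since $p \in \support{\aconf_1}$ we have $\aconf_1 \preceq \aconf_1 \Mcup p \preceq \aconf'_1$ by transitivity of~$\preceq$, which is exactly the conclusion.

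For the single-process claim I would fix the run $\aconf_1 \rightarrow^\ast \aconf_2$ as a concrete sequence of moves, each advancing one designated process, and track the trajectory of a single process~$P$ ending in~$q$ (one exists because $q \in \support{\aconf_2}$); let $p \in \support{\aconf_1}$ be its starting location. I~introduce one shadow process $P'$ started in~$p$ and schedule it to imitate~$P$ in lockstep: immediately after $P$ performs its $k$-th transition $(r,\op,d'',r')$, I let $P'$ (which is then in state~$r$) perform the very same transition. Each inserted move is legal precisely because operations are non-atomic: if it is a write, writes are always enabled and $P'$ rewrites the identical value~$d''$; if it is a read, the register still holds~$d''$ (it was just read by~$P$, and reads do not alter it), so $P'$ may read it as well, again leaving the register unchanged. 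Hence every insertion preserves the register content seen by all later original moves, the remainder of the run is untouched, and at the end $P'$ sits in~$q$; this run witnesses $\aconf_1 \Mcup p \rightarrow^\ast \aconf_2 \Mcup q$.

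The only genuinely delicate point is the bookkeeping inside this claim: formalizing ``the trajectory of a process'' within the multiset semantics (choosing at each step which concrete token moves) and verifying that slotting the shadow move directly after each of $P$'s moves never breaks the enabling condition of any subsequent transition. Everything hinges on the two non-atomicity facts used above---writes are unconditionally enabled, and reads leave the register unchanged---which is exactly why the property must fail under atomic read/write, as illustrated by the remark following Theorem~\ref{thm-main}. Combining the base case, the single-process claim, and the surplus induction completes the proof.
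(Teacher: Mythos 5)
Your proof is correct, and it rests on exactly the same two non-atomicity facts as the paper's proof (writes are always enabled, and a read can be replayed immediately after itself since it leaves the register untouched), but the decomposition is transposed. The paper inducts on the length of the run $\aconf_1\rightarrow^\ast\aconf_2$: for a single transition $(q_1,\op,d,q_2)$ with $\aconf_2\preceq\aconf'_2$, it replays that one transition $k+1$ times in a row, where $k$ is the surplus at~$q_2$, thereby absorbing all the extra processes relevant to that step at once, and then chains these single-step liftings backwards along the run. You instead induct on the size $\Msize{\delta}$ of the surplus, and your key claim adds one shadow process that replays an \emph{entire trajectory} through the whole run, interleaved in lockstep with the process it copies. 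Both arguments are sound. The paper's route has lighter bookkeeping: it never needs to assign identities to tokens or extract a per-process trajectory from the multiset semantics, which is precisely the delicate point you flag yourself (it is doable, but must be said: at each step one arbitrarily designates which token at the source location moves). Your route, in exchange, isolates a slightly stronger and reusable statement---any single participant's behaviour can be cloned by one added participant starting in the same location---and makes the role of non-atomicity especially vivid. In both arguments the support condition built into $\preceq$ is what guarantees that the extra processes can be started in already-occupied locations, so that $\aconf_1\preceq\aconf'_1$ holds at the end.
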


\begin{proof}
Assume $\gamma_1\rightarrow\gamma_2$ with transition $(q_1,\op,d,q_2)\in T$
and $\gamma_2\preceq \gamma'_2$. Let
$k=\state{\gamma'_2}(q_2)-\state{\gamma_2}(q_2)\geq 0$. Then
$\state{\gamma'_2}(q_2) = k+\state{\gamma_2}(q_2) \geq k+1$ so we define
$\gamma'_1=\tuple{ \state{\gamma'_2}\Msetminus q_2^{k+1}\Mcup q_1^{k+1},
  \data{\gamma_1} }$. Then:
\begin{itemize}
  \item if $\op=W$, the path $\gamma'_1\rightarrow^\ast\gamma'_2$, obtained by
    performing $k+1$ times the transition $(q_1,W,d,q_2)$, is a valid
    path since 
    write operations can always be performed, independently of the content of
    the register;
  \item if $\op=R$, the path $\gamma'_1\rightarrow^\ast\gamma'_2$, defined by
    applying $k+1$ times the transition $(q_1,R,d,q_2)$, is also a
    valid path, since the data~$d$ in the register is unchanged.
\end{itemize}
By construction, we have $\state{\gamma_1}(q_2)=\state{\gamma'_1}(q_2)$,
and $1\leq \state{\gamma_1}(q_1)\leq\state{\gamma'_1}(q_1)$, 
and $\state{\gamma_1}(q)=\state{\gamma'_1}(q)$ for all $q\neq q_2$. Hence
$\gamma_1\preceq \gamma'_1$.
The result is then generalized to arbitrary path
$\gamma_1\rightarrow^\ast\gamma_2$ by induction.
\end{proof}

$\Pre^\ast(\ConfsF)$ is also upward-closed, since if $\ConfsF$ can
be reached from some configuration~$\aconf$, it~can also be reached
by a larger configuration by keeping the extra copies~idle.
Thus:
\begin{lemma}\label{lem:prepost-upwardclosed}
$\Post^\ast(\upcl\set{\tuple{q_0,d_0}})$ and $\Pre^\ast(\ConfsF)$ are
upward-closed sets in~$\tuple{\Confs,\preceq}$.
\end{lemma}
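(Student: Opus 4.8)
The plan is to establish the two upward-closure claims separately, since they rely on slightly different monotonicity arguments. For $\Post^\ast(\upcl\set{\tuple{q_0,d_0}})$ I~would invoke the copycat property (Lemma~\ref{lem:copycat}), whereas for $\Pre^\ast(\ConfsF)$ I~would argue directly, letting the surplus processes stay idle while the original ones replay a witnessing path.

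First I~would treat $\Post^\ast(\upcl\set{\tuple{q_0,d_0}})$. Fix $\aconf \in \Post^\ast(\upcl\set{\tuple{q_0,d_0}})$ and $\aconf'\in\Confs$ with $\aconf\preceq\aconf'$; I~must show $\aconf'$ is reachable from an initial configuration. By definition there is $k\geq 1$ with $\tuple{q_0^k,d_0}\rightarrow^\ast\aconf$. Applying Lemma~\ref{lem:copycat} with $\aconf_1=\tuple{q_0^k,d_0}$, $\aconf_2=\aconf$ and $\aconf'_2=\aconf'$ yields a configuration $\aconf'_1$ with $\tuple{q_0^k,d_0}\preceq\aconf'_1$ and $\aconf'_1\rightarrow^\ast\aconf'$. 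The crucial observation is that $\preceq$ demands equality of supports: from $\tuple{q_0^k,d_0}\preceq\aconf'_1$ we read off $\data{\aconf'_1}=d_0$ and $\support{\aconf'_1}=\support{q_0^k}=\set{q_0}$, so every process of $\aconf'_1$ sits in~$q_0$ and $\aconf'_1=\tuple{q_0^{k'},d_0}$ for some $k'\geq k\geq 1$. Hence $\aconf'_1\in\upcl\set{\tuple{q_0,d_0}}$, and therefore $\aconf'\in\Post^\ast(\upcl\set{\tuple{q_0,d_0}})$.

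Next I~would treat $\Pre^\ast(\ConfsF)$. Fix $\aconf\preceq\aconf'$ with $\aconf\in\Pre^\ast(\ConfsF)$, and pick $\aconf_f\in\ConfsF$ with $\aconf\rightarrow^\ast\aconf_f$. Writing $\delta=\state{\aconf'}\Msetminus\state{\aconf}$ for the surplus multiset (well defined since $\state{\aconf}\Msubseteq\state{\aconf'}$ and $\data{\aconf}=\data{\aconf'}$), I~would replay from $\aconf'$ the exact sequence of transitions used along $\aconf\rightarrow^\ast\aconf_f$, leaving the $\delta$ processes untouched. Each such transition stays firable: its source location keeps at least as many processes as in the original run, writes are always available, and reads see the very same register content because idle processes never modify the register. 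This run ends in $\aconf_f\Mcup\delta$, and since $\state{\aconf_f\Mcup\delta}(q_f)\geq\state{\aconf_f}(q_f)>0$ we have $\aconf_f\Mcup\delta\in\ConfsF$, so $\aconf'\in\Pre^\ast(\ConfsF)$.

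The routine part is the replay in the second claim, which merely re-uses the non-atomicity of reads already exploited in Lemma~\ref{lem:copycat}. The step that needs care is the support argument in the first claim: it is precisely the equal-support clause in the definition of~$\preceq$ that forces any configuration above $\tuple{q_0^k,d_0}$ to again be an all-in-$q_0$ initial configuration, keeping it inside $\upcl\set{\tuple{q_0,d_0}}$. With the plain multiset order this would fail, which is exactly why the refined order was introduced.
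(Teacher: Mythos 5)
Your proof is correct and follows the paper's own route: the $\Post^\ast$ part is exactly the intended application of Lemma~\ref{lem:copycat} (including the key remark that the equal-support clause of $\preceq$ forces $\aconf'_1$ to be of the form $\tuple{q_0^{k'},d_0}$, hence still in $\upcl\set{\tuple{q_0,d_0}}$), and the $\Pre^\ast$ part is the paper's ``keep the extra copies idle'' argument, which you spell out in more detail than the paper does.
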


\subsection{Existence of a cut-off}

From Lemma~\ref{lem:prepost-upwardclosed}, and from the fact that
$\tuple{\Confs,\preceq}$ is a wqo, 
there must exist two finite
sequences of configurations $(\theta_i)_{1 \leq i \leq n}$ and $(\eta_i)_{1
  \leq i \leq m}$ such that ${\Post^\ast(\upcl\set{\tuple{q_0,d_0}})=
\upcl\set{\theta_1,\ldots,\theta_n}}$ and
$\Pre^\ast(\ConfsF)=\upcl\set{\eta_1,\ldots,\eta_m}$. 
By~analyzing these two sequences, we now prove that any register protocol
has a cut-off (for~any initial register value and any target
location).

We let $\Delta, \Delta' \subseteq \Confs$ be two
upward-closed sets (for $\preceq$). We say that \emph{$\Delta$ is
  included in $\Delta'$ modulo single-state incrementation} whenever for
every $\gamma \in \Delta$, for every $q \in
\support{\aconf}$, there is some $k \in \bbN$ such that
$\aconf \Mcup q^k \in \Delta'$. Note that this condition can be
checked using only
comparisons between minimal elements of $\Delta$ and $\Delta'$. In
particular, we have the following lemma.
\begin{lemma}
  \label{lemma:inc}
  $\Post^\ast(\upcl\set{\tuple{q_0,d_0}})$ is included in
  $\Pre^\ast(\ConfsF)$ modulo single-state incrementation if, and only~if, for
  all $i \in [1;n]$, and 
  for all~$q\in\support{\theta_i}$,
  there exists $j \in [1;m]$ such that
  $\data{\theta_i}=\data{\eta_j}$ and $\support{\theta_i} =
  \support[\theta]{\eta_j}$ and $\state{\eta_j}(q') \leq
  \state{\theta_i}(q')$ for all $q' \in Q \setminus\set{q}$.
\end{lemma}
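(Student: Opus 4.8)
The plan is to turn the abstract inclusion ``modulo single-state incrementation'' into a pointwise test and then prove the two directions of the stated equivalence from it; the only real content is a short monotonicity argument. Fix a configuration $\gamma$ and a location $q \in \support{\gamma}$, and let $P(\gamma,q)$ denote the property: \emph{there is $j \in [1;m]$ with $\data{\eta_j}=\data{\gamma}$, $\support{\eta_j}=\support{\gamma}$, and $\state{\eta_j}(q') \leq \state{\gamma}(q')$ for all $q'\in Q\setminus\set{q}$}. I claim that $\gamma \Mcup q^k \in \Pre^\ast(\ConfsF)=\upcl\set{\eta_1,\dots,\eta_m}$ holds for some $k\in\bbN$ if, and only~if, $P(\gamma,q)$ holds. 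Indeed, since $q$ already lies in $\support{\gamma}$, the configuration $\gamma\Mcup q^k$ has the same register value and the same support as $\gamma$ for every $k$, with $\state{\gamma\Mcup q^k}(q)=\state{\gamma}(q)+k$ and $\state{\gamma\Mcup q^k}(q')=\state{\gamma}(q')$ for $q'\neq q$. As $\preceq$ forces equality of data and of support, an index $j$ can witness $\eta_j\preceq\gamma\Mcup q^k$ only if $\data{\eta_j}=\data{\gamma}$ and $\support{\eta_j}=\support{\gamma}$; for such a $j$ the coordinate-$q$ constraint $\state{\eta_j}(q)\leq \state{\gamma}(q)+k$ is met once $k$ is large enough, while the constraints on the remaining coordinates are exactly those of $P(\gamma,q)$ and are independent of $k$. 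With this reformulation, the left-hand side of the lemma reads ``$P(\gamma,q)$ holds for every $\gamma\in\Delta$ and every $q\in\support{\gamma}$'' and the right-hand side reads ``$P(\theta_i,q)$ holds for every $i$ and every $q\in\support{\theta_i}$'', where $\Delta=\Post^\ast(\upcl\set{\tuple{q_0,d_0}})=\upcl\set{\theta_1,\dots,\theta_n}$.

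It then suffices to show these two statements are equivalent. From left to right this is immediate: each generator satisfies $\theta_i\preceq\theta_i$, hence $\theta_i\in\Delta$, and instantiating the left-hand side at $\gamma=\theta_i$ gives $P(\theta_i,q)$. For the converse I would exploit that $P$ is monotone along $\preceq$ within a fixed data/support class: given $\gamma\in\Delta$ and $q\in\support{\gamma}$, pick a minimal generator $\theta_i\preceq\gamma$, so that $\data{\theta_i}=\data{\gamma}$, $\support{\theta_i}=\support{\gamma}$ (whence $q\in\support{\theta_i}$) and $\state{\theta_i}\Msubseteq\state{\gamma}$. Applying the right-hand side to $\theta_i$ and $q$ yields an index $j$, and this same $j$ witnesses $P(\gamma,q)$: the data and supports match by transitivity, and $\state{\eta_j}(q')\leq\state{\theta_i}(q')\leq\state{\gamma}(q')$ for every $q'\neq q$.

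I do not expect a genuine obstacle here, since once the reformulation is established one direction is trivial and the other is a one-line monotonicity step. The single point requiring care is the bookkeeping imposed by $\preceq$, which---unlike the ordinary multiset order---demands equality of supports and of register values. Concretely, one must verify that incrementing a location already present in $\support{\gamma}$ leaves the support unchanged, so that $\gamma$ and $\gamma\Mcup q^k$ are comparable to exactly the same $\eta_j$'s, and that the various equalities of data and support align correctly when combining the information from $P(\theta_i,q)$ with that from $\theta_i\preceq\gamma$.
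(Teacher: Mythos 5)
Your proposal is correct and follows essentially the same route as the paper: the forward direction reads off the conditions from a witness $\eta_j\preceq\theta_i\Mcup q^k$ (using that adding copies of a $q$ already in the support changes neither the support nor the data), and the backward direction combines $\theta_i\preceq\gamma$ with the condition at $(\theta_i,q)$ to exhibit $\eta_j\preceq\gamma\Mcup q^k$ for a suitable $k$. Your factoring through the pointwise predicate $P(\gamma,q)$ is just a cleaner packaging of the paper's inline argument.
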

\begin{proof}
Suppose that $\Post^\ast(\upcl\set{\tuple{q_0,d_0}})$ is included in
  $\Pre^\ast(\ConfsF)$ modulo single-state incrementation. Let $i \in [1;n]$ and $q \in
  \support{\theta_i}$. By definition, there exists some $k \in \bbN$
  such that  $\theta_i \Mcup q^k \in \Pre^\ast(\ConfsF)$. Hence there is
  $j \in [1;m]$ such that $\eta_j \preceq \theta_i \Mcup q^k$. Hence
  we have 
  $\data{\theta_i}=\data{\eta_j}$, $\support{\theta_i} =
  \support{\eta_j}$ and $\state{\eta_j}(q') \leq
  \state{\theta_i}(q')$ for all $q' \in Q \setminus\set{q}$.

Now assume that  for
  all $i \in [1;n]$, and 
  for all~$q\in\support{\theta_i}$,
  there exists $j \in [1;m]$ such that
  $\data{\theta_i}=\data{\eta_j}$, $\support{\theta_i} =
  \support[\theta]{\eta_j}$ and $\state{\eta_j}(q') \leq
  \state{\theta_i}(q')$ for all $q' \in Q \setminus\set{q}$. Let
  $\aconf \in \Post^\ast(\upcl\set{\tuple{q_0,d_0}})$. Hence there
  exists $i \in [1;n]$ such that $\theta_i \preceq \aconf$ (note that
  hence $\support{\theta_i}=\support{\aconf}$). Let $q \in
  \support{\aconf}$. Then there exists $j \in [1;m]$ such that
  $\data{\theta_i}=\data{\eta_j}$, $\support{\theta_i} =
  \support[\theta]{\eta_j}$ and $\state{\eta_j}(q') \leq
  \state{\theta_i}(q')$ for all $q' \in Q \setminus\set{q}$. Take
  $k=|\state{\eta_j}(q)-\state{\theta_i}(q)|$. We consider the configuration $\aconf'=\aconf
  \Mcup q^k$. For all $q' \in Q \setminus\set{q}$, we have $\state{\eta_j}(q') \leq
  \state{\theta_i}(q') \leq \state{\aconf'}(q')$. And we have
  $\state{\eta_j}(q) \leq \state{\theta_i}(q) + k \leq
  \state{\aconf'}(q)$. This allows us to deduce that $\eta_j \preceq
  \aconf \Mcup q^k$ and consequently  $\aconf \Mcup q^k \in
  \Pre^\ast(\ConfsF)$. Consequently $\Post^\ast(\upcl\set{\tuple{q_0,d_0}})$ is included in
  $\Pre^\ast(\ConfsF)$ modulo single-state incrementation.
\end{proof}

Using the previous characterization of inclusion modulo single-state
incrementation for $\Post^\ast(\upcl\set{\tuple{q_0,d_0}})$ and
$\Pre^\ast(\ConfsF)$ together with the result of Lemma
\ref{lem:prob-pre-post}, we~are able to provide a first characterization of
the existence of a negative cut-off.

\begin{lemma}
\label{lem:negative-cut-off}
 If $\Post^\ast(\upcl\set{\tuple{q_0,d_0}})$ is not
included in $\Pre^\ast(\ConfsF)$ modulo single-state incrementation, then
$\max_{1 \leq i \leq n}(\Msize{\state{\theta_i}} )$ is a
negative cut-off.
\end{lemma}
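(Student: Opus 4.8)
The plan is to prove the contrapositive-style statement: assuming $\Post^\ast(\upcl\set{\tuple{q_0,d_0}})$ is \emph{not} included in $\Pre^\ast(\ConfsF)$ modulo single-state incrementation, I want to show that $M = \max_{1 \leq i \leq n}(\Msize{\state{\theta_i}})$ is a negative cut-off, i.e. that for every $h \geq M$ we have $\pr(\tuple{q_0^h,d_0},\Objective{q_f}) < 1$. By Lemma~\ref{lem:prob-pre-post}, this is equivalent to exhibiting, for each such $h$, some reachable configuration of size $h$ that lies \emph{outside} $\Pre^\ast(\ConfsF)$ — that is, to find a configuration $\aconf \in \Post^\ast(\set{\tuple{q_0^h,d_0}})$ with $\aconf \notin \Pre^\ast(\ConfsF)$.

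First I would unpack the failure of the inclusion via the characterization of Lemma~\ref{lemma:inc}. Its negation gives us a specific index $i \in [1;n]$ and a specific state $q \in \support{\theta_i}$ witnessing the failure: for \emph{every} $j \in [1;m]$, at least one of the three conditions ($\data{\theta_i}=\data{\eta_j}$, $\support{\theta_i}=\support{\eta_j}$, or $\state{\eta_j}(q') \le \state{\theta_i}(q')$ for all $q' \in Q \setminus \set q$) fails. By the proof of Lemma~\ref{lemma:inc}, this is exactly the statement that $\theta_i \Mcup q^k \notin \Pre^\ast(\ConfsF)$ for all $k \in \bbN$: no amount of piling extra copies onto the single state $q$ can bring the configuration into the set from which $q_f$ is reachable. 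This $\theta_i$ is the seed of our bad configuration, and its size is at most $M$ by definition of the max.

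Next I would produce, for each target size $h \geq M$, a reachable configuration above $\theta_i$ that is still outside $\Pre^\ast(\ConfsF)$. Since $\theta_i \in \Post^\ast(\upcl\set{\tuple{q_0,d_0}}) = \upcl\set{\theta_1,\dots,\theta_n}$, it is reachable from some $\tuple{q_0^{\Msize{\state{\theta_i}}},d_0}$. The idea is to add exactly $k = h - \Msize{\state{\theta_i}} \ge 0$ surplus processes and route them all into the state $q$, obtaining $\aconf = \theta_i \Mcup q^k$ of size $h$. I must check two things: that $\aconf$ is reachable from $\tuple{q_0^h,d_0}$, and that $\aconf \notin \Pre^\ast(\ConfsF)$. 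The second point is precisely the consequence of the failing inclusion noted above. For the first, I would invoke the copycat monotonicity of Lemma~\ref{lem:copycat} (or rather its natural companion for adding idle/duplicated processes): starting the $k$ extra copies in $q_0$ and mimicking the run that drives one copy into $q$, or more directly using that $q \in \support{\theta_i}$ means some process already sits in $q$ along the reaching run, so additional copies can shadow it and accumulate in $q$. This realizes $\theta_i \Mcup q^k$ as a genuine successor configuration of the all-$q_0$ start of size $h$.

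The main obstacle I anticipate is the reachability bookkeeping in the last step: turning ``$\theta_i$ is reachable and $q$ is in its support'' into ``$\theta_i \Mcup q^k$ is reachable from the uniform initial configuration of size $h$.'' The monotonicity lemma as stated moves \emph{targets} upward ($\aconf_2 \preceq \aconf_2'$) while adjusting the source, which is the wrong direction; what I actually need is that extra processes launched from $q_0$ can be shepherded to join the copy already in $q$ without disturbing the register content or the rest of the run — essentially a copycat argument applied to the sub-run that first deposits a process in $q$. I would phrase this carefully so that the $k$ surplus processes replay the moves of an existing process that reaches $q$, leaving them parked in $q$ exactly when the original run reaches $\theta_i$; since only the count at $q$ changes and the support is unchanged, the resulting configuration is $\theta_i \Mcup q^k \succeq \theta_i$ and is reachable, completing the argument that it witnesses $\pr(\tuple{q_0^h,d_0},\Objective{q_f}) < 1$.
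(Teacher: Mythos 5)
Your proposal is correct and follows the paper's proof essentially verbatim: extract the witness pair $(i,q)$ from the negation of Lemma~\ref{lemma:inc}, form $\gamma_{i,h}=\theta_i\Mcup q^{h-\Msize{\state{\theta_i}}}$ for each $h\geq\max_i\Msize{\state{\theta_i}}$, check that it is reachable from $\tuple{q_0^h,d_0}$ yet no $\eta_j$ lies below it, and conclude via Lemma~\ref{lem:prob-pre-post}. Your only worry---that Lemma~\ref{lem:copycat} points ``the wrong way'' for the reachability step---is unfounded: taking $\aconf_1=\tuple{q_0^{\Msize{\state{\theta_i}}},d_0}$, $\aconf_2=\theta_i$ and $\aconf_2'=\theta_i\Mcup q^k$, that lemma produces a source $\aconf_1'\succeq\aconf_1$ reaching $\aconf_2'$, and since $\preceq$ forces equal support and register value (and sizes are preserved along runs), $\aconf_1'=\tuple{q_0^h,d_0}$ exactly as required, so your hand-rolled shadowing argument is just a correct re-derivation of this instance.
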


\begin{proof}
  Applying the previous lemma, there is $i \in [1;n]$ and
  $q\in\support{\theta_i}$ such that for every $j\in [1;m]$, either
  $\data{\theta_i}\neq \data{\eta_j}$, or $\support{\theta_i} \neq
  \support{\eta_j}$, or there is $q_j\neq q$ such that
  $\state{\eta_j}(q_j) > \state{\theta_i}(q_j)$.

  Let $k_i = \Msize{\state{\theta_{i}}}$, and fix $k \ge k_i$.  We
  define $\gamma_{i,k}=\theta_i \Mcup q^{k-k_i}$.  Clearly $\theta_i
  \preceq \gamma_{i,k} \in \Post^\ast(\{q_0^k,d_0\})$. On the
  opposite, for every $j \in [1;m]$, $\eta_j \not\preceq
  \gamma_{i,k}$; hence we conclude that $\gamma_{i,k}\not\in
  \Pre^\ast(\ConfsF)$.

  Applying Lemma~\ref{lem:prob-pre-post}, we get that
  $\pr(\tuple{q^k_0,d_0},\Objective{q_f})<1$ for every $k \ge k_i$.
\end{proof}

We now prove that if the condition of Lemma~\ref{lem:negative-cut-off} fails
to hold, then there is a positive cut-off.In~order to make our claim precise,
for every $i \in [1;n]$ and for any $q \in \support{\theta_i}$, we~let
$d_{i,q}=\max\{(|\state{\eta_j}(q) - \state{\theta_i}(q)|) \mid 1 \leq
j \leq m\ \text{and}\ \support{\theta_i} = \support{\eta_j}\}$.

\begin{lemma}
\label{lem:positive-cut-off}
If $\Post^\ast(\upcl\set{\tuple{q_0,d_0}})$ is included in
$\Pre^\ast(\ConfsF)$ modulo single-state incrementation, then $\max_{1
  \leq i \leq n}(\Msize{\state{\theta_i}} + \sum_{q\in
  \support{\theta_i}} d_{i,q} )$ is a  positive cut-off.
\end{lemma}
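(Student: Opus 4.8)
The plan is to show that under the "inclusion modulo single-state incrementation" condition, for all large enough $k$, every reachable configuration from $\tuple{q_0^k, d_0}$ lies in $\Pre^\ast(\ConfsF)$, which by Lemma~\ref{lem:prob-pre-post} gives almost-sure reachability. So let me think about how to convert the modulo-incrementation inclusion into genuine inclusion by adding enough processes.

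Let me set $K = \max_{1 \leq i \leq n}(\Msize{\state{\theta_i}} + \sum_{q \in \support{\theta_i}} d_{i,q})$ and fix any $k \geq K$. Take an arbitrary reachable configuration $\aconf \in \Post^\ast(\{q_0^k, d_0\})$. By Lemma~\ref{lem:prepost-upwardclosed} this set equals $\upcl\set{\theta_1,\ldots,\theta_n}$, so there is some index $i$ with $\theta_i \preceq \aconf$; in particular $\support{\theta_i} = \support{\aconf}$ and $\data{\theta_i} = \data{\aconf}$. The modulo-incrementation condition (via Lemma~\ref{lemma:inc}) tells me that for each $q \in \support{\theta_i}$ there is an index $j$ with $\data{\theta_i} = \data{\eta_j}$, matching supports, and $\state{\eta_j}(q') \leq \state{\theta_i}(q')$ for all $q' \neq q$. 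The only obstruction to $\eta_j \preceq \theta_i$ is that $\eta_j$ might require strictly more processes in the single state $q$ than $\theta_i$ provides — exactly the deficit captured by $d_{i,q}$.

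The key point is to argue that in $\aconf$ at least one state $q$ already carries enough of a surplus over $\theta_i$ to cover this deficit $d_{i,q}$, so that $\eta_j \preceq \aconf$ and hence $\aconf \in \Pre^\ast(\ConfsF)$. Because $\Msize{\state{\aconf}} = k \geq K \geq \Msize{\state{\theta_i}} + \sum_{q \in \support{\theta_i}} d_{i,q}$, the total surplus $\Msize{\state{\aconf}} - \Msize{\state{\theta_i}} = \sum_{q}(\state{\aconf}(q) - \state{\theta_i}(q))$ is at least $\sum_{q \in \support{\theta_i}} d_{i,q}$. By an averaging/pigeonhole argument, there must exist some $q^\star \in \support{\theta_i}$ whose individual surplus $\state{\aconf}(q^\star) - \state{\theta_i}(q^\star)$ is at least $d_{i,q^\star}$. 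For that $q^\star$, pick the corresponding witness $\eta_{j}$ from Lemma~\ref{lemma:inc}: for $q' \neq q^\star$ we get $\state{\eta_j}(q') \leq \state{\theta_i}(q') \leq \state{\aconf}(q')$, while for $q^\star$ itself $\state{\eta_j}(q^\star) \leq \state{\theta_i}(q^\star) + d_{i,q^\star} \leq \state{\aconf}(q^\star)$; combined with equal data and equal supports this yields $\eta_j \preceq \aconf$, so $\aconf \in \Pre^\ast(\ConfsF)$.

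The main obstacle is the pigeonhole step: a naive bound would need the surplus at the \emph{specific} state $q$ chosen for a given witness, but Lemma~\ref{lemma:inc} lets me pick, for each state $q$ of the support, its own witness index $j$, so I only need to find \emph{one} state whose surplus beats its own deficit $d_{i,q}$. That existence follows from the sum of surpluses dominating the sum of deficits, which is precisely what the bound $K$ is engineered to guarantee — so the delicate part is carefully matching the definition of $d_{i,q}$ (taken as a max of absolute differences over valid witnesses) against the per-state surplus. Having shown $\Post^\ast(\{q_0^k,d_0\}) \subseteq \Pre^\ast(\ConfsF)$ for every $k \geq K$, Lemma~\ref{lem:prob-pre-post} gives $\pr(\tuple{q_0^k,d_0},\Objective{q_f}) = 1$ for all such $k$, so $K$ is a positive cut-off, completing the proof.
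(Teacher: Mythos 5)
Your proposal is correct and follows essentially the same route as the paper's proof: find $i$ with $\theta_i\preceq\aconf$, locate a state $q$ whose surplus in $\aconf$ over $\theta_i$ covers the deficit $d_{i,q}$, and use the witness $\eta_j$ from Lemma~\ref{lemma:inc} to conclude $\eta_j\preceq\aconf$. The only difference is cosmetic: you spell out the averaging/pigeonhole argument for the existence of such a $q$, which the paper leaves as a one-line remark.
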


\begin{proof}
  Let $k_0 = \max_{1 \leq i \leq n}(\Msize{\state{\theta_i}} +
  \sum_{q\in \support{\theta_i}} d_{i,q} )$, and $k \ge
  k_0$. Consider a configuration $\gamma\in\Post^\ast(\tuple{q_0^k,d_0})$.  There
  exists $i\in[1;n]$ with $\theta_i \preceq \gamma$.

  Choose $q \in \support{\theta_i} = \support{\gamma}$ such that
  $\state{\gamma}(q) \ge \state{\theta_i}(q)+d_{i,q}$ (this $q$ should
  exist since $\Msize{\state{\gamma}} \ge k_0$).
  We apply Lemma~\ref{lemma:inc}, and exhibit $j \in [1;m]$ such that
  $\data{\theta_i} = \data{\eta_j}$, $\support{\theta_i} =
  \support{\eta_j}$ and for every $q' \ne q$, $\state{\eta_j}(q') \le
  \state{\theta_i}(q')$. 
  Now, $\state{\eta_j}(q) \le \state{\theta_i}(q) + d_{i,q} \le
  \state{\gamma}(q)$. We conclude that $\eta_j \preceq \gamma$, and
  therefore that $\Post^\ast(\set{\tuple{q_0^k,d_0}}) \subseteq
  \Pre^\ast(\ConfsF)$.  By Lemma~\ref{lem:prob-pre-post}, we conclude
  that $k_0$ is a positive cut-off.
\end{proof}

The last two lemmas entail our first result:
\begin{theorem}
\label{thm-cutoff}
Any register protocol admits a cut-off (for any given initial register value and
target state).
\end{theorem}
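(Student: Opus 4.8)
The plan is to observe that the two preceding lemmas have been engineered precisely to exhaust the two cases of a single binary condition, so that the theorem follows by a one-line dichotomy. First I would recall the setup: by Lemma~\ref{lem:prepost-upwardclosed}, both $\Post^\ast(\upcl\set{\tuple{q_0,d_0}})$ and $\Pre^\ast(\ConfsF)$ are upward-closed in $\tuple{\Confs,\preceq}$; and since $\tuple{\Confs,\preceq}$ is a wqo (Lemma~\ref{lem:conf-wqo}), each of them is the upward closure of a finite set of minimal elements, giving us the finite families $(\theta_i)_{1\le i\le n}$ and $(\eta_j)_{1\le j\le m}$ already fixed in the discussion above.

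The heart of the argument is then a case split on whether $\Post^\ast(\upcl\set{\tuple{q_0,d_0}})$ is included in $\Pre^\ast(\ConfsF)$ \emph{modulo single-state incrementation}. This is a decidable predicate over the two finite families, as noted after the definition of that notion and as made explicit by the syntactic characterization of Lemma~\ref{lemma:inc}. Since the condition either holds or fails, exactly one of the two branches applies. If it holds, I would invoke Lemma~\ref{lem:positive-cut-off} to conclude that $\max_{1\le i\le n}\bigl(\Msize{\state{\theta_i}} + \sum_{q\in\support{\theta_i}} d_{i,q}\bigr)$ is a positive cut-off. If it fails, I would instead invoke Lemma~\ref{lem:negative-cut-off} to conclude that $\max_{1\le i\le n}\bigl(\Msize{\state{\theta_i}}\bigr)$ is a negative cut-off. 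In either case a cut-off exists, which is exactly the claim.

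There is essentially no obstacle remaining at this level: all the genuine work, namely establishing upward-closure, extracting finite bases via the wqo, and proving that the incrementation condition governs whether reachability saturates to probability~$1$ or stays strictly below~$1$ on a cofinal set of sizes, has already been carried out in Lemmas~\ref{lem:prob-pre-post}, \ref{lem:copycat}, \ref{lem:prepost-upwardclosed}, \ref{lemma:inc}, \ref{lem:negative-cut-off} and~\ref{lem:positive-cut-off}. The only point worth stating carefully is that the two lemmas really do partition all cases of the same predicate (rather than leaving a gap), which is immediate since one lemma assumes the inclusion and the other assumes its negation. I would therefore keep the proof to a short paragraph: recall the finite bases, split on the predicate, and cite the matching lemma in each branch. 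This short proof also explains why the earlier remark emphasized that positive and negative cut-offs are not complements of one another: the dichotomy here is on the single-state incrementation condition, and it is that condition, not a naive complementation of the reachability event, that separates the two regimes.
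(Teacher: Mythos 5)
Your proposal is correct and follows exactly the paper's argument: the theorem is stated immediately after Lemmas~\ref{lem:negative-cut-off} and~\ref{lem:positive-cut-off} precisely because the dichotomy on inclusion modulo single-state incrementation exhausts all cases, yielding a negative cut-off in one branch and a positive one in the other. Your additional remarks on decidability of the predicate and on why the two regimes are not naive complements are accurate but not needed for the proof itself.
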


\section{Detecting negative cut-offs}
\label{sec-algo}

We develop an algorithm for deciding whether a distributed system
associated with a register protocol has a negative cut-off. Thanks to
Theorem~\ref{thm-cutoff}, this can also be used to detect the
existence of a positive cut-off. Our~algorithm relies on the
construction and study of a \emph{symbolic graph}, 
as we define below: for any given protocol~$\Prot$, the~symbolic graph
has bounded size, but can be used to reason about \emph{arbitrarily large}
distributed systems built from~$\Prot$. It~will store sufficient
information to decide the existence of a negative cut-off.

\subsection{\(k\)-bounded symbolic graph}

In this section, we consider a register protocol
$\Prot=\tuple{Q,D,q_0,T}$,  its associated distributed system
$\Sys_\Prot=\tuple{\Confs,\Prob}$, an initial register value $d_0 \in
D$, and  a target location~$q_f \in Q$ of~$\Prot$.
With~$\Prot$, we~associate a finite-state graph, called
\emph{symbolic graph of index~$k$}, which for $k$ large enough 
contains enough information
to decide 
the existence of a negative cut-off.

\begin{definition}
Let~$k$ be an integer.
The~\emph{symbolic graph of
  index~$k$} associated with~$\Prot$ and~$d_0$ is the transition
system~$\SG=\tuple{V,v_0,E}$ where
\begin{itemize}
\item $V = \bbN^Q_k \times 2^Q\times D$ contains triples made of a multiset of
  states of~$Q$ of size~$k$, a~subset of~$Q$, and the content of 
  the register; 
  the~multiset (called \emph{concrete part}) is used to
  exactly keep track of a fixed set of $k$ processes, while the subset of $Q$
  (the~\emph{abstract part}) encodes the support of the arbitrarily many remaining
  processes;
\item $v_0=\tuple{q_0^k, \{q_0\}, \{d_0\}}$;
\item transitions are of two types, depending whether they involve a process
  in the concrete part or a process in the abstract part. Formally, there is a
  transition $\tuple{\mu,S,d} \to \tuple{\mu',S',d'}$
  whenever there is a transition $(q,O,d'',q')\in T$ such that $d=d'=d''$ if
  $O=R$ and $d'=d''$ if~$O=W$, 
  and one of the following two conditions holds:
  \begin{itemize}
  \item either $S'=S$ and $q\Msubseteq\mu$ (that is, $\mu(q)>0$) and
    $\mu'=\mu\Msetminus q\Mcup q'$;
  \item or $\mu=\mu'$ and $q\in S$ and $S'\in\{S\setminus\{q\}\cup
    \{q'\},S\cup \{q'\}\}$.
  \end{itemize}
\end{itemize}
\end{definition}

The symbolic graph of index~$k$ can be used as an abstraction of distributed
systems made of at~least $k+1$ copies of~$\Prot$:  
it~keeps full information of
the states of $k$ processes, and only gives the support of the states
of the other processes. In~particular, the~symbolic graph of index~$0$
provides only the states appearing in each configuration of the system.
\begin{figure}[htb]
\centering
\begin{tikzpicture}[xscale=1.1,yscale=.9]
\begin{scope}
\tikzstyle{sg}=[draw,rounded corners=2mm,minimum height=6mm,inner sep=3pt]
\draw (0,0) node[sg,fill=fbleuc] (q0-0) {$\{q_0\},0$};

\draw (2,-.5) node[sg,fill=fvert] (q1-1) {$\{q_1\},1$};
\draw (2,-1.5) node[sg,fill=fvert] (q1-0) {$\{q_1\},0$};
\draw (8,-.5) node[sg,fill=fvert] (q1-2) {$\{q_1\},2$};

\draw (5,-.5) node[sg,fill=fjaune] (q2-1) {$\{q_2\},1$};

\draw (2,1.5) node[sg,path picture={\foreach \i in {-2,-1.5,...,2}
  {\draw[line width=2.5mm,fvert] (\i+.25,-1) -- +(0,2);
   \draw[line width=2.5mm,fbleuc] (\i,-1) -- +(0,2);}}] (q01-0) {$\{q_0,q_1\},0$};
\draw (2,.5) node[sg,path picture={\foreach \i in {-2,-1.5,...,2}
  {\draw[line width=2.5mm,fvert] (\i+.25,-1) -- +(0,2);
   \draw[line width=2.5mm,fbleuc] (\i,-1) -- +(0,2);}}] (q01-1) {$\{q_0,q_1\},1$};
\draw (8,1.5) node[sg,path picture={\foreach \i in {-2,-1.5,...,2}
  {\draw[line width=2.5mm,fvert] (\i+.25,-1) -- +(0,2);
   \draw[line width=2.5mm,fbleuc] (\i,-1) -- +(0,2);}}] (q01-2) {$\{q_0,q_1\},2$};

\draw (5,1.5) node[sg,path picture={\foreach \i in {-2,-1.5,...,2}
  {\draw[line width=2.5mm,fjaune] (\i+.25,-1) -- +(0,2);
   \draw[line width=2.5mm,fbleuc] (\i,-1) -- +(0,2);}}] (q02-1) {$\{q_0,q_2\},1$};

\draw (5,.5) node[sg,path picture={\foreach \i in {-2,-1.25,...,2}
  {\draw[line width=2.5mm,fjaune] (\i+.25,-1) -- +(0,2);
   \draw[line width=2.5mm,fbleuc] (\i+.5,-1) -- +(0,2);
   \draw[line width=2.5mm,fvert] (\i,-1) -- +(0,2);}}] (q012-1) {$\{q_0,q_1,q_2\},1$};
\draw (8,.5) node[sg,path picture={\foreach \i in {-2,-1.25,...,2}
  {\draw[line width=2.5mm,fjaune] (\i+.25,-1) -- +(0,2);
   \draw[line width=2.5mm,fbleuc] (\i+.5,-1) -- +(0,2);
   \draw[line width=2.5mm,fvert] (\i,-1) -- +(0,2);}}] (q012-2) {$\{q_0,q_1,q_2\},2$};

\draw (5,-1.5) node[sg,path picture={\foreach \i in {-2,-1.5,...,2}
  {\draw[line width=2.5mm,fjaune] (\i+.25,-1) -- +(0,2);
   \draw[line width=2.5mm,fvert] (\i,-1) -- +(0,2);}}] (q12-1) {$\{q_1,q_2\},1$};
\draw (8,-1.5) node[sg,path picture={\foreach \i in {-2,-1.5,...,2}
  {\draw[line width=2.5mm,fjaune] (\i+.25,-1) -- +(0,2);
   \draw[line width=2.5mm,fvert] (\i,-1) -- +(0,2);}}] (q12-2) {$\{q_1,q_2\},2$};

\draw (11,0) node[sg,fill=frouge,minimum height=3.3cm,text width=1.75cm,align=center] (qf)
      {all sets\\ containing $q_f$};
\draw[-latex',rounded corners=2mm] (q0-0) |- (q1-0);
\draw[-latex',rounded corners=2mm] (q0-0) |- (q01-0);
\draw (q01-0) edge[-latex',out=210,in=150] (q1-0);
\draw[-latex'] (q01-0) -- (q01-1);
\draw[-latex'] (q1-0) -- (q1-1);
\draw[-latex'] (q01-1) -- (q012-1);
\draw[-latex'] (q01-1) -- (q02-1);
\draw[-latex'] (q1-1) -- (q12-1);
\draw[-latex'] (q1-1) -- (q2-1);
\draw[-latex'] (q02-1) -- (q012-2);
\draw[-latex'] (q02-1) -- (q01-2);
\draw (q01-2) edge[-latex',out=160,in=40] (q01-1.30);
\draw[-latex'] (q012-1) -- (q01-2);
\draw[-latex'] (q012-1) -- (q02-1);
\draw[latex'-latex'] (q012-1) -- (q012-2);
\draw[-latex'] (q2-1) -- (q12-2);
\draw[-latex'] (q12-1) -- (q2-1);
\draw[-latex'] (q12-1) -- (q2-1);
\draw[-latex'] (q2-1) -- (q1-2);
\draw[-latex'] (q12-1) -- (q1-2);
\draw (q1-2) edge[-latex',out=165,in=15] (q1-1);
\draw[latex'-latex'] (q12-1) -- (q12-2);
\draw[-latex'] (q12-2) -- (q1-2);
\draw[-latex'] (q012-2) -- (q01-2);
\draw[-latex'] (q12-2) -- ($(qf.180)+(0,-1.5)$);
\draw[-latex'] (q012-2) -- ($(qf.180)+(0,.5)$);
\end{scope}
\end{tikzpicture}
\caption{Symbolic graph (of index~$0$) of the protocol of
  Fig.~\ref{fig-ex-protocol} (self-loops omitted).}\label{fig-ex-symbg}
\end{figure}

\begin{subexample}{running}\label{ex-1c}
Consider the protocol depicted in Fig.~\ref{fig-ex-protocol}. Its symbolic
graph of index~$0$ is depicted in Fig.~\ref{fig-ex-symbg}.
Notice that the final state (representing all configurations containing~$q_f$)
is reachable from any state of this symbolic
graph. However, our original protocol~$\Prot$ of Fig.~\ref{fig-ex-protocol}
does not have a positive cut-off (assuming initial register value~$0$):
indeed, with positive probability, a~single process will go to~$q_1$
and immediately
write~$1$ in the register, thus preventing any other process to leave~$q_0$;
then one may check that the process in~$q_1$ alone cannot reach~$q_f$, so that
the probability of reaching~$q_f$ from~$q_0^k$ is strictly less than~$1$, for
any~$k>0$. 
This livelock is not taken into account in the symbolic graph of~index~$0$,
because from any configuration with support~$\{q_0,q_1\}$ and register
data equal to~$1$, the~symbolic graph
has a transition to the configuration with support~$\{q_0,q_1,q_2\}$, which
only exists in the concrete system when there are at least two processes
in~$q_1$. As we prove in the following, analyzing the symbolic graph for a
sufficiently large index guarantees to detect such a situation. 
\end{subexample}

\enlargethispage{2mm}
For any index~$k$, the symbolic graph achieves the following correspondence: 
\begin{lemma}
\label{lemma-gsymb}
Given two states~$\tuple{\mu,S,d}$ and~$\tuple{\mu',S',d'}$,
there is a 
transition from $\tuple{\mu,S,d}$ 
to~$\tuple{\mu',S',d'}$ in the symbolic graph \SG of index~$k$ if, and only~if,
there exist multisets~$\delta$ and $\delta'$ with respective supports~$S$
and~$S'$, and such that $\tuple{\mu\Mcup\delta,d}\to
\tuple{\mu'\Mcup\delta',d'}$ in $\Sys_\Prot$. 
\end{lemma}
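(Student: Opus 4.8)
The plan is to prove Lemma~\ref{lemma-gsymb} by directly unwinding both definitions and exhibiting the witness multisets in each direction. The statement is a clean equivalence between a symbolic transition and the existence of a concrete transition after padding with multisets of the prescribed supports, so the proof should be a careful bookkeeping argument rather than anything deep.

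First I would prove the forward (``only if'') direction. Assume there is a symbolic transition $\tuple{\mu,S,d}\to\tuple{\mu',S',d'}$, witnessed by a protocol transition $(q,O,d'',q')\in T$ satisfying the data constraints ($d=d'=d''$ when $O=R$, and $d'=d''$ when $O=W$). The definition splits into two cases, and I would treat them separately. In the \emph{concrete} case ($S'=S$, $q\Msubseteq\mu$, $\mu'=\mu\Msetminus q\Mcup q'$), I choose $\delta=\delta'$ to be any multiset with support exactly $S$; since $S'=S$ this choice has support $S'$ as well, and the concrete transition $\tuple{\mu\Mcup\delta,d}\to\tuple{\mu'\Mcup\delta,d'}$ fires the same transition on the process moved inside $\mu$ (the data constraints are inherited directly). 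In the \emph{abstract} case ($\mu=\mu'$, $q\in S$, and $S'\in\set{S\setminus\set{q}\cup\set{q'},\,S\cup\set{q'}}$), I must build $\delta$ with support $S$ and $\delta'$ with support $S'$ so that firing $(q,O,d'',q')$ on one of the abstractly-tracked processes transforms $\delta$ into $\delta'$. Here the two sub-cases matter: if $S'=S\cup\set{q'}$ I keep at least one copy of $q$ in $\delta'$ (so $q$ stays in the support), whereas if $S'=S\setminus\set{q}\cup\set{q'}$ I put exactly one copy of $q$ in $\delta$ and move it, so $q$ disappears from the support. In both sub-cases I add enough copies of $q'$ and of every other state of $S$ to guarantee the supports come out exactly as required after the move.

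Next I would prove the backward (``if'') direction. Suppose $\delta,\delta'$ have supports $S,S'$ and $\tuple{\mu\Mcup\delta,d}\to\tuple{\mu'\Mcup\delta',d'}$ via some transition $(q,O,d'',q')\in T$; the data constraints on $d,d',d''$ are then immediate from the definition of concrete successors. The question is merely whether the moved process was counted in the concrete multiset or in the abstract padding, i.e.\ whether the decremented copy of $q$ lay in $\mu$ or in $\delta$. If it lay in $\mu$, then $q\Msubseteq\mu$ and $\mu'=\mu\Msetminus q\Mcup q'$, and since $\delta=\delta'$ (the abstract part is untouched, so its support is unchanged), we land in the first symbolic case with $S'=S$. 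If it lay in $\delta$, then $\mu=\mu'$, and $\delta'=\delta\Msetminus q\Mcup q'$; comparing supports, $q'\in\support{\delta'}=S'$ always, and $q\in S$ since $\delta(q)>0$, while $q$ remains in $S'$ iff $\delta$ contained at least two copies of $q$. This is exactly the dichotomy $S'\in\set{S\setminus\set{q}\cup\set{q'},\,S\cup\set{q'}}$, so we land in the second symbolic case.

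The one point requiring genuine care—and the main (if modest) obstacle—is the support bookkeeping in the abstract case of the forward direction: I must make sure the constructed $\delta'$ has support \emph{exactly} $S'$ and not merely containing it, which forces me to be explicit about how many copies of $q$ survive and to include at least one copy of every state of $S'$ that is not otherwise produced by the move. Note that the index $k$ plays no role in the argument, since it only fixes the common cardinality of the concrete parts $\mu,\mu'$, which is untouched by these constructions; the equivalence therefore holds uniformly for every~$k$.
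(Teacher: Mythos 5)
Your proof is correct and follows essentially the same route as the paper's: both directions are handled by the same case split (moved process in the concrete part versus the abstract padding), with the sub-case $\delta(q)=1$ versus $\delta(q)>1$ governing whether $q$ survives in the support, exactly as in the paper. The only difference is the order in which you treat the two implications, which is immaterial.
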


\begin{proof}
We begin with the reverse implication: if there is a
transition from $\tuple{\mu\Mcup\delta,d}$ to
$\tuple{\mu'\Mcup\delta',d'}$ 
(assuming it~is a write transition, the other case being similar) in the
distributed system, then this transition originates from a
transition~$(q,W,d',q')$ in~$\Prot$, and either this transition affects a
process from the set of~$k$ processes that are monitored exactly by the symbolic
graph, or it affects a process in the abstract part, in which only the
support is monitored. In~the former case, $\delta=\delta'$, hence also their
supports are equal, and the transition~$(q,W,d',q')$ is applied to a location
in~$\mu$, which entails $q\Msubseteq \mu$ and $\mu'=\mu\Msetminus
q\Mcup q'$ and $d'=d''$. In~the latter case, we~get $\mu=\mu'$, and the
transition~$(q,W,d',q')$ is applied to a state in the support, so that $q\in
S$ and $S'$~is either $S\cup\{q'\}$ (in~case $\delta(q)>1$), or
  it~is~$S\setminus\{q\}\cup\{q'\}$ (in~case $\delta(q)=1$).

Conversely, if there is a transition $\tuple{\mu,S,d} \to
\tuple{\mu',S',d'}$ 
(assuming it originates from a $W$-transition $(q,W,d',q')$ in~$\Prot$, the
other case being similar), we~again have to consider two separate cases.
\begin{itemize}
\item The first case is when $S'=S$, $q\Msubseteq\mu$ and
  $\mu'=\mu\Msetminus q\Mcup q'$, corresponding to the case where the
  transition is performed by one of the $k$ processes tracked exactly by the
  symbolic graph. In~that case, for any~$\delta$ with support~$S$, there is a
  transition from~$\tuple{\mu\Mcup\delta,d}$
  to~$\tuple{\mu'\Mcup\delta,d'}$ in the 
  concrete distributed system. 
\item In the second case, $\mu'=\mu$, $q\in S$, and $S'$ is either
  $S\setminus\{q\}\cup\{q'\}$ or $S\cup \{q'\}$. Consider any
  multiset~$\delta$ with support~$S$, and such that $\delta(q)>1$ in case
  $S'=S\cup\{q'\}$, and $\delta(q)=1$ if $S'=S\setminus \{q\}\cup\{q'\}$. 
  Let~$\delta'=\delta\Msetminus q\Mcup q'$; then the support of~$\delta'$
  is~$S'$, and there is a transition from~$\tuple{\mu\Mcup\delta,d}$
  to~$\tuple{\mu'\Mcup \delta', d'}$, as required.
\end{itemize}
This concludes our proof.
\end{proof}

\subsection{Deciding the existence of a negative cut-off}

We now explain how the symbolic graph can be used to decide the existence of a
negative cut-off. 
Since
$\Pre^\ast(\ConfsF)$ is upward-closed in~$\tuple{\Confs,\preceq}$,
there is a finite set of configurations $\{\eta_i=\tuple{\mu_i,d_i} \mid
1\leq i\leq m\}$ such that $\Pre^\ast(\ConfsF)=\upcl\set{\eta_i\mid
  1\leq i\leq m}$. We~let $K=\max \{ \state{\eta_i}(q) \mid q\in Q,\
1\leq i\leq m\}$, and show that for our purpose, it~is
enough to consider the symbolic graph of index $K\cdot|Q|$; 
we provide a bound on~$K$ in the next section.

\begin{lemma}
\label{lem:algo}
There is a negative cut-off for~$\Prot$, $d_0$ and~$q_f$
if, and only~if, there is a node in the symbolic graph
of index~$K\cdot |Q|$ that is reachable from~$\tuple{q_0^{K\cdot|Q|},\{q_0\}, d_0}$ 
but from which no configuration involving~$q_f$ is reachable.
\end{lemma}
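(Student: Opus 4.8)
The lemma is an equivalence characterizing negative cut-offs via reachability in the symbolic graph $\SG$ of index $K\cdot|Q|$. My strategy is to establish the following bridge: a negative cut-off exists precisely when $\Post^\ast(\upcl\set{\tuple{q_0,d_0}})$ is \emph{not} included in $\Pre^\ast(\ConfsF)$ modulo single-state incrementation (this is exactly the contrapositive packaging of Lemmas~\ref{lem:negative-cut-off} and~\ref{lem:positive-cut-off}, since a protocol always has a cut-off and the two conditions are complementary). So I would first reduce the statement to: the single-state-incrementation inclusion fails if, and only~if, the symbolic graph of index $K\cdot|Q|$ contains a node reachable from $\tuple{q_0^{K\cdot|Q|},\{q_0\},d_0}$ from which no $q_f$-node is reachable. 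The symbolic-graph correspondence of Lemma~\ref{lemma-gsymb} is the key tool translating concrete transitions into symbolic ones and back.

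\textbf{Key steps.} First I would show that reachable nodes of $\SG$ correspond to elements of $\Post^\ast(\upcl\set{\tuple{q_0,d_0}})$: by iterating Lemma~\ref{lemma-gsymb}, a node $\tuple{\mu,S,d}$ is reachable from $v_0$ exactly when some concrete configuration $\tuple{\mu\Mcup\delta,d}$ with $\support\delta=S$ lies in $\Post^\ast$ of a large enough initial configuration, i.e.\ refines a minimal element $\theta_i$. Second, I would show that ``$q_f$ is reachable in $\SG$ from $\tuple{\mu,S,d}$'' corresponds to membership of the associated concrete configurations in $\Pre^\ast(\ConfsF)$; again Lemma~\ref{lemma-gsymb} gives this, and the upward-closure of $\Pre^\ast(\ConfsF)$ (Lemma~\ref{lem:prepost-upwardclosed}) lets me relate symbolic reachability of a $q_f$-node to the existence of \emph{some} $\eta_j\preceq$ (a suitable completion of) the configuration. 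The crucial quantitative point is the choice of index $K\cdot|Q|$: with $K=\max_j\max_q\state{\eta_j}(q)$, keeping $K$ copies of each of the $|Q|$ states concretely guarantees that whenever an $\eta_j$ could be dominated by increasing a single state $q$ (as in the definition of inclusion modulo single-state incrementation), the concrete part already records enough copies in every \emph{other} state to witness $\eta_j\preceq\gamma$. Thus a symbolic node from which $q_f$ is \emph{un}reachable corresponds exactly to a $\theta_i$ and a state $q$ for which no $\eta_j$ fits, which is precisely the failure of single-state incrementation (Lemma~\ref{lemma:inc}).

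\textbf{Main obstacle.} The delicate direction is the ``only~if'' quantitative argument: I must verify that index $K\cdot|Q|$ is \emph{large enough} so that the abstract part never spuriously enables a path to $q_f$ that the concrete system forbids. Concretely, a symbolic path to a $q_f$-node uses transitions justified by Lemma~\ref{lemma-gsymb}, but each such transition only requires \emph{some} multiset $\delta$ of the right support; stitching these local witnesses into one global concrete run reaching $\ConfsF$ requires enough processes in the abstract part, and I expect to invoke the copycat monotonicity of Lemma~\ref{lem:copycat} to pump up the populations uniformly along the path. The interplay between ``the concrete part tracks $K$ copies per state'' and ``the $\eta_j$ have coordinates bounded by $K$'' is exactly what forces the index to be $K\cdot|Q|$; getting this counting right, and confirming that a dead symbolic node (no route to $q_f$) genuinely yields a configuration outside $\Pre^\ast(\ConfsF)$ for all large sizes, is the heart of the proof. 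The converse direction (failure of inclusion $\Rightarrow$ dead symbolic node) is comparatively routine once the correspondence lemmas are in place.
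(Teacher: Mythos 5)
Your outline is sound and lands on the same technical core as the paper (the correspondence of Lemma~\ref{lemma-gsymb}, the copycat Lemma~\ref{lem:copycat}, and the bound $K$ on the minimal elements of $\Pre^\ast(\ConfsF)$), but it packages the ``negative cut-off'' side of the equivalence differently. The paper never routes through inclusion modulo single-state incrementation: for the ``if'' direction it realizes the dead symbolic node as a concrete reachable configuration via Lemma~\ref{lemma-gsymb}, pumps it with Lemma~\ref{lem:copycat} to kill almost-sure reachability for \emph{all} larger sizes, and then simply invokes Theorem~\ref{thm-cutoff} to convert ``no positive cut-off'' into ``negative cut-off''; for the ``only if'' direction it takes some $N>K\cdot|Q|$ with probability $<1$, extracts from the finite Markov chain a reachable configuration $\tuple{\mu,d}\notin\Pre^\ast(\ConfsF)$, and reads off a deficient coordinate $q^i$ with $\mu(q^i)<\mu_i(q^i)\leq K$ for each minimal element $\eta_i$. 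Your detour through Lemma~\ref{lemma:inc} and the explicit witness $\theta_i\Mcup q^{k-k_i}$ of Lemma~\ref{lem:negative-cut-off} is equivalent and yields a more constructive witness configuration, at the price of an extra layer; the paper's version is shorter because the $\theta_i$ never need to be mentioned. Both routes then coincide on the decisive step: track concretely all processes that end in the deficient states and argue that the abstract part cannot resurrect a symbolic path to~$q_f$.

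Two points in your write-up need tightening, though neither is fatal. First, the index $K\cdot|Q|$ does not work because the concrete part ``keeps $K$ copies of each of the $|Q|$ states'' --- it cannot, since the $K\cdot|Q|$ tracked processes migrate wherever the run sends them. The actual mechanism is to choose \emph{a posteriori} which processes to track, namely all those ending up in the (at most $|Q|$) deficient states $q^i$, each hosting fewer than $K$ processes; this removes every $q^i$ from the abstract support $S$, so that for any completion $\delta$ with $\support{\delta}=S$ one still has $(\kappa\Mcup\delta)(q^i)=\kappa(q^i)<\mu_i(q^i)$, which is exactly what blocks the spurious symbolic paths to $q_f$ you rightly worry about. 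Second, your closing sentence labels ``failure of inclusion $\Rightarrow$ dead symbolic node'' as the routine converse, but since failure of inclusion is equivalent to the existence of a negative cut-off, that is precisely the delicate quantitative direction you described two sentences earlier; the genuinely routine direction is the reverse one, where the over-approximation property of the symbolic graph and copycat pumping do all the work.
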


\begin{proof}
  We~begin with the converse implication, assuming that there is a
  state~$\tuple{\mu,S,d}$ in the symbolic graph of index~$K\cdot |Q|$
  that is reachable from~$(q_0^{K\cdot |Q|},\{q_0\},d_0)$ and from
  which no configuration in~$\ConfsF$ is reachable. Applying
  Lemma~\ref{lemma-gsymb}, there exist multisets~$\delta_0=q_0^N$
  and~$\delta$, with respective supports $\{q_0\}$ and~$S$, such that
  $\tuple{\mu\Mcup\delta,d}$ is reachable from~${\tuple{q_0^{K\cdot
        |Q|}\Mcup\delta_0,d_0}}$. If~location~$q_f$ was reachable
  from~$\tuple{\mu\Mcup\delta,d}$ in the distributed system, then
  there would exist a path from~$\tuple{\mu,S,d}$ to a state
  involving~$q_f$ in the symbolic graph, which contradicts our
  hypothesis. By Lemma~\ref{lem:copycat}, it follows that such a configuration $\tuple{\mu\Mcup\delta',d}$\,---\,which cannot reach $q_f$\,---\,can be reached from ${\tuple{q_0^{K\cdot
        |Q|}\Mcup q_0^{N'},d_0}}$ for any $N' \geq N$: hence it~cannot be the
  case that $q_f$ is reachable almost-surely for any $N' \geq N$. Therefore there cannot be a
  positive cut-off, which implies that there is a negative one (from
  Theorem~\ref{thm-cutoff}).

Conversely, 
if there is a negative cut-off, then for
some~$N>K\cdot |Q|$, the~distributed system~$\Sys_\Prot^N$ with~$N$
processes  has probability less than~$1$ of reaching~$\ConfsF$ from~$q_0^N$. This~system
being finite, 
there must exist a reachable configuration~$\tuple{\mu,d}$ from which $q_f$ is
not reachable~\cite{PoMC2008-BK}. Hence
$\tuple{\mu,d}\notin\Pre^\ast(\ConfsF)$, 
and
for all
$i\leq m$, there~is a location~$q^i$ such that $\mu(q^i)<\mu_i(q^i)\leq K$.
Then there must exist a reachable state $\tuple{\kappa,S,d}$ of the
symbolic 
graph of index~$K\cdot |Q|$ for which $\kappa(q^i)=\mu(q^i)$ and
$q^i\notin S$, for all $1\leq i\leq m$: it~indeed suffices to follow the path
from~$\tuple{q_0^N,d_0}$ to~$\tuple{\mu,d}$ while keeping track of the
processes 
that end up in some~$q^i$ in the concrete part; this~is possible because the
concrete part has size at least~$K\cdot |Q|$. 

It~remains to be proved that no state involving~$q_f$ is reachable
from~$\tuple{\kappa,S,d}$ in the symbolic graph. If~it were the case, then by
Lemma~\ref{lemma-gsymb}, there would exist~$\delta$ with support~$S$ such that
$\ConfsF$ is reachable from~$\tuple{\kappa\Mcup\delta,d}$ in the distributed
system. Then $\tuple{\kappa\Mcup\delta,d}\in \Pre^\ast(\ConfsF)$, so~that for
some~$1\leq i\leq m$, $(\kappa\Mcup\delta)(q^i)\geq \mu_i(q^i)$, which is not
possible as $\kappa(q^i)<\mu_i(q^i)$ and $q^i$~is not in the support~$S$
of~$\delta$. This contradiction concludes the proof.
\end{proof}

\begin{remark}
Besides the existence of a negative cut-off, this proof also provides us with
an upper bound on the tight cut-off,  as we shall see in Section~\ref{sec-bounds}.
\end{remark}

\subsection{Complexity of the algorithm}

We now consider the complexity of the algorithm that can be deduced from Lemma~\ref{lem:algo}.
Using results by Rackoff on the coverability problem in Vector Addition
Systems~\cite{Rac78}, we can bound~$K$\,--\,and
consequently the size of the needed symbolic graph\,--\,by a
\textit{double-exponential} in the size of the protocol. Therefore, it
suffices to solve a reachability problem in
\NLOGSPACE~\cite{Sip97} on this doubly-exponential graph:
this boils down to $\NEXPSPACE$ with regard to the protocol's size, hence
$\EXPSPACE$ by Savitch's theorem~\cite{Sip97}.

\begin{theorem}
  \label{thm:expspace-membership}
Deciding the existence of a negative cut-off is in \EXPSPACE.
\end{theorem}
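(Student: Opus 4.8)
The plan is to turn Lemma~\ref{lem:algo} into a complexity bound by quantifying the size of the symbolic graph we must search. The statement we want is that deciding the existence of a negative cut-off lies in \EXPSPACE. By Lemma~\ref{lem:algo}, this amounts to a reachability query on the symbolic graph of index $K\cdot|Q|$: we must check whether some node is reachable from the initial node $\tuple{q_0^{K\cdot|Q|},\{q_0\},d_0}$ while being unable to reach any node whose support contains~$q_f$. So the whole task reduces to (i)~bounding~$K$, and (ii)~running a graph-reachability procedure within the appropriate space budget.

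First I would bound~$K$, the largest multiplicity appearing among the minimal elements $\eta_i$ generating $\Pre^\ast(\ConfsF)$. This is where I expect the real work to lie, and it is the main obstacle. The set $\Pre^\ast(\ConfsF)$ is exactly the set of configurations from which a configuration containing~$q_f$ is coverable, and the computation of predecessors under $\to$ is a monotone (WQO-compatible) backward reachability. I~would recast this as a coverability question for a suitable Vector Addition System (or equivalently a counter machine without zero-tests): each control location of~$\Prot$ becomes a counter recording the number of processes in that state, and each protocol transition becomes a VAS rule that decrements one counter and increments another, gated by the finitely many register values~$D$ (which I would fold into the finite control). The key is that the minimal generators of an upward-closed coverability set have norms bounded by Rackoff's theorem~\cite{Rac78}: for a VAS of dimension~$d$ with transitions of bounded magnitude, the minimal elements of the coverability set have size at most doubly-exponential in~$d$. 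Since the dimension here is~$|Q|$ and $|D|$ only scales the finite control, this yields $K \le 2^{2^{\mathrm{poly}(|\Prot|)}}$, i.e.\ a double-exponential bound in the size of~$\Prot$. I~would take care to verify that the monotonicity property established in Lemma~\ref{lem:copycat}, together with the support-equality constraint built into~$\preceq$, is genuinely compatible with the VAS coverability framework, so that Rackoff's bound applies to~$K$ and not merely to a coarser relaxation.

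With $K$ bounded doubly-exponentially, the index $K\cdot|Q|$ is still doubly-exponential, and so is the number of states of the symbolic graph of that index: a~state is a triple in $\bbN^Q_{K\cdot|Q|}\times 2^Q\times D$, and the count of size-$(K\cdot|Q|)$ multisets over~$Q$ is at most $(K\cdot|Q|+1)^{|Q|}$, again doubly-exponential. Crucially, each such state can be \emph{written down} using only an exponential number of bits, since encoding the multiplicities requires $O(|Q|\cdot\log(K\cdot|Q|)) = 2^{\mathrm{poly}(|\Prot|)}$ bits once~$K$ is double-exponential. The edges of the symbolic graph are locally checkable from the definition (they correspond to single transitions of~$\Prot$ acting on the concrete or abstract part), so successors can be generated on the fly without materializing the whole graph.

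Finally I~would assemble the pieces. The property sought in Lemma~\ref{lem:algo}\,---\,``there exists a node reachable from the initial node from which no $q_f$-node is reachable''\,---\,is a nesting of two reachability queries on this graph. Plain graph reachability is in \NLOGSPACE~\cite{Sip97} measured in the graph's size; since the graph has doubly-exponential size but each vertex has an exponential-length encoding, this becomes a nondeterministic procedure using exponential space, i.e.\ $\NEXPSPACE$ in the size of~$\Prot$. (The inner ``cannot reach $q_f$'' check is a complementary reachability query, but nondeterministic space is closed under complement by Immerman--Szelepcs\'enyi, so it stays within the same budget.) Applying Savitch's theorem~\cite{Sip97} to collapse $\NEXPSPACE$ to $\EXPSPACE$ yields the claimed bound. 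The only delicate point outside the Rackoff estimate is bookkeeping the encoding size so that ``$\NLOGSPACE$ on a doubly-exponential graph'' is correctly read as $\EXPSPACE$ overall; everything else is routine.
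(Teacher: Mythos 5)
Your proposal follows essentially the same route as the paper's proof: recast $\Pre^\ast(\ConfsF)$ as a VAS coverability set, invoke Rackoff's bound to get a doubly-exponential bound on~$K$, observe that the symbolic graph of index $K\cdot|Q|$ then has doubly-exponential size, and conclude via \NLOGSPACE reachability on that graph followed by Savitch's theorem. Your added remarks on vertex encodings, on-the-fly successor generation, and closure under complement for the inner ``cannot reach $q_f$'' query are sensible refinements of details the paper leaves implicit, but they do not change the argument.
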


\begin{proof}
  Recall that $\Pre^\ast(\ConfsF)$ is exactly the set of configurations that
  can cover $q_f$, i.e., configurations $\aconf$ from which there exists a
  path $\aconf \rightarrow^\ast \aconf'$ with $\state{\aconf'}(q_f) > 0$.
  Recall also that it can be written as an upward-closure of minimal elements:
  $\Pre^\ast(\ConfsF) = \upcl\set{\eta_1,\ldots,\eta_m}$. Now,
  consider the value $K$ in Lemma~\ref{lem:algo}: it is defined as $K=\max \{
  \state{\eta_i}(q) \mid q\in Q,\ 1\leq i\leq m\}$, i.e., the maximum number
  of states appearing in any multiset of any minimal configuration~$\eta_i$.
  The value of $K$ can be bounded using classical results on the coverability
  problem in Vector Addition Systems (VAS)~\cite{Rac78}.

  Intuitively, a $b$-dimensional VAS is a system composed of an initial
  $b$-dimensional vector~$\mathbf{v_{0}}$ of naturals (the~\textit{axiom}),
  and a finite set of $b$-dimensional integer vectors (the~\textit{rules}).
  An~\textit{execution} is built as follows: it starts from the axiom and, at
  each step, the next vector is derived from the current one by adding a rule,
  provided that this derivation is \textit{admissible}, i.e., that the
  resulting vector only contains non-negative integers. An execution ends if
  no derivation is admissible. The \textit{coverability problem} asks if a
  given target vector $\mathbf{v} = (v_1, \ldots{}, v_b)$ can be covered,
  i.e., does there exists a (possibly extendable) execution $\mathbf{v_{0}}
  \rightsquigarrow \mathbf{v_{1}} \rightsquigarrow \ldots{} \rightsquigarrow
  \mathbf{v_{n}} = \mathbf{v'}$ such that, for all $1\leq i \leq b$, it~holds that $v_i \leq v'_i$.

  Our distributed system $\Sys_\Prot$ can be seen as a $\Msize Q$-dimensional
  VAS where each transition is modeled by a rule vector modifying the multiset
  of the current configuration. Formally, one has to take into account that
  available rules depend on the data stored in the shared register. This
  can be achieved by either considering the expressively equivalent model of
  VAS with states (VASS, see~e.g.,~\cite{RY86}) or by
  adding $\mathcal{O}(\Msize D)$ dimensions to enforce this restriction. Over such a VAS(S), we
  are interested in the coverability of the vector corresponding to the
  multiset~$q_f$ (i.e.,~containing only one copy of~$q_f$ and no other state).
  In~particular, we want to bound the size of vectors needed to cover~$q_f$,
  as it will lead to a bound on minimal elements~$\eta_i$ of
  $\Pre^\ast(\ConfsF)$, hence a bound on the value~$K$.

  Results by Rackoff (hereby as reformulated by Demri
  \textit{et~al.}~\cite[Lemma 3]{DJLL13}) state
  that if a covering execution exists from an initial
  vector~$\mathbf{v_0}$, then there is one whose length may be
  doubly-exponential in the size of the input: singly-exponential in
  the size of the rule set and the target vector, and
  doubly-exponential in the dimension of the~VAS. Hence, for our
  distributed system $\Sys_\Prot$, seen as a VAS, this implies that if
  $q_f$ can be covered from a configuration $\aconf$, there is a
  covering execution whose length is bounded by
  some~$L$ in $2^{\calO\left(\Msize
        Q \cdot \Msize D\right)^{\mathcal{O}\left(\Msize
        Q + \Msize D\right)}}$.
  This bound on the \textit{length} of the execution
  obviously also implies a bound on the \textit{number of processes}
  actively involved in the execution (because at each transition, only
  one process is active). Hence, we can deduce that if a configuration
  $\aconf = \tuple{\mu,d}$ can cover $q_f$ (i.e., there exists a path
  $\aconf \rightarrow^\ast \aconf'$ with $\state{\aconf'}(q_f) > 0$),
  then it is also the case of configuration $\aconf'' =
  \tuple{\mu'',d}$, which we build as follows: $\forall\, q \in Q,\;
  \mu''(q) = \min\{\mu(q), L\}$. That is, it also holds that there
  exists a path $\aconf'' \rightarrow^\ast \aconf'''$ with
  $\state{\aconf'''}(q_f) > 0$.

  By definition of $K$ as $K=\max \{ \state{\eta_i}(q) \mid q\in Q,\ 1\leq
  i\leq m\}$ and configurations $\eta_i$ as minimal elements for the
  upward-closure $\Pre^\ast(\ConfsF) =
  \upcl\set{\eta_1,\ldots,\eta_m}$, we have that $K \leq L$ in any
  case. Hence, for our algorithm to be correct, it suffices to consider the
  symbolic graph of index $L\cdot \Msize Q$, as presented in
  Lemma~\ref{lem:algo}, and to solve a reachability problem over this graph.
  Let us study the size of this graph. Its state space is $V =
  \mathbb{N}_{L\cdot \Msize Q}^{Q} \times 2^Q\times D$.
The multisets of~$\mathbb{N}_{L\cdot \Msize Q}^{Q}$ are essentially 
mappings $Q \to \left[0; L\cdot \Msize Q \right]$.
Hence, we have that:
\(
  \Msize V \leq (L\cdot \Msize Q + 1)^{\Msize Q} \cdot 2^{\Msize Q} \cdot
  \Msize D 
\),
which is doubly-exponential in both the state space of the protocol and the
size of the data alphabet (because $L$~is). Since reachability over directed
graphs lies in $\NLOGSPACE$~\cite{Sip97} with regard to the size of the graph,
we~obtain $\NEXPSPACE$-membership with regard to the size of the protocol.
Finally, by Savitch's theorem~\cite{Sip97}, we know that
$\NEXPSPACE=\EXPSPACE$, which concludes our proof.
\end{proof}

\subsection{\PSPACE-hardness for deciding cut-offs}
\label{sec-hardness}

Our proof is based on the encoding of a linear-bounded Turing machine~\cite{Sip97}:
we~build a register protocol for which there is a negative cut-off if, and
only~if, the machine reaches its final state~$\qhalt$ with the tape head
reading the last cell of the tape.

\begin{theorem}
  \label{thm:negative-cutoff}
Deciding the existence of a negative cut-off is \PSPACE-hard.
\end{theorem}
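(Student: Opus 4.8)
The plan is to reduce from the acceptance problem for a deterministic linear-bounded automaton (equivalently, halting of a deterministic Turing machine~$M$ running in space~$n$), which is \PSPACE-complete~\cite{Sip97}. Fixing such a machine~$M$ together with an input~$w$ of length~$n$, I~will build in polynomial time a register protocol~$\Prot$, an~initial register value~$d_0$, and a target location~$q_f$ such that $\Prot$ admits a negative cut-off if, and only~if, $M$~reaches~$\qhalt$ with its head on the rightmost cell. Since a cut-off always exists (Theorem~\ref{thm-cutoff}) and positive and negative cut-offs are mutually exclusive (Lemmas~\ref{lem:negative-cut-off} and~\ref{lem:positive-cut-off}), deciding the sign of the cut-off is then at least as hard as deciding acceptance.

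Because we only care about almost-sure reachability, Lemma~\ref{lem:prob-pre-post} reduces everything to a purely combinatorial question: for each size~$N$, does every configuration reachable from $\tuple{q_0^N,d_0}$ still have $q_f$ in its $\Pre^\ast$-closure? The construction encodes a configuration of~$M$ across the processes: the finite register carries the control state of~$M$, the symbol currently under the head, and the head position (an index in $\{1,\dots,n\}$, which is a constant once $M$ and~$w$ are fixed and hence fits in a polynomial-size alphabet~$D$), while the $n$ tape cells are stored locally by $n$ distinct processes, each remembering its index and its current symbol. A~setup phase, driven by passing the values $1,2,\dots,n$ through the register, sequentially recruits one process per cell and thereby fixes an ordering on them; here I~will splice in a filter-like gadget in the spirit of~$\calF_n$ (Lemma~\ref{lem-filter}), so that this full initialisation can be completed exactly when at least $n$ processes are present. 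Each step of~$M$ is then simulated by the process whose stored index equals the current head position: it reads the register, rewrites its stored symbol, and writes back the updated state and the incremented or decremented head index.

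The escape to~$q_f$ is wired so that $q_f$ stays reachable from essentially every configuration, with one exception: reaching the accepting configuration of~$M$ (state~$\qhalt$ on the last cell) writes a dedicated \emph{locking} value into the register which, just like the livelock of the protocol in Figure~\ref{fig-ex-protocol}, traps the processes away from~$q_f$. If~$M$ accepts, then for every $N\ge n$ a faithful simulation reaching this locked configuration is reachable, so $\Post^\ast(\set{\tuple{q_0^N,d_0}})\not\subseteq\Pre^\ast(\ConfsF)$ and, by Lemma~\ref{lem:prob-pre-post}, $\pr(\tuple{q_0^N,d_0},\Objective{q_f})<1$, yielding a negative cut-off. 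Conversely, if $M$~does not accept, I~must show that from \emph{every} reachable configuration $q_f$ is still reachable, giving a positive cut-off.

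The main obstacle is precisely this converse direction, and it stems from the copycat/monotonicity phenomenon (Lemma~\ref{lem:copycat}): anonymity lets the scheduler have several processes grab the same cell index or desynchronise the simulation, so a faithful run cannot be enforced. The construction must therefore be \emph{fail-safe}: any inconsistency\,---\,two processes claiming the same index with different symbols, a~step taken out of turn, or an incomplete setup\,---\,must leave an always-available route to~$q_f$, so that the locking value can be produced \emph{only} along a genuinely faithful run, which by determinism of~$M$ mirrors the real computation and reaches the lock exactly when $M$ accepts. Checking that every such ``spurious'' configuration lies in $\Pre^\ast(\ConfsF)$ while the honest accepting run does not is the delicate point; once it is established, together with the filter guaranteeing enough processes for all $N\ge n$, the equivalence ``negative cut-off $\Leftrightarrow$ $M$ accepts'' follows, and the polynomial size of~$\Prot$ gives \PSPACE-hardness.
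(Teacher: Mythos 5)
Your reduction source (acceptance of a deterministic linear-bounded machine), your encoding (one process per tape cell, head state and position in the register), and your identification of the delicate point are all aligned with the paper. But the mechanism you sketch for resolving that delicate point does not work, and the proposal defers precisely the step that carries the difficulty. You want the ``locking'' value to be producible only along a faithful run, enforced by fail-safe escapes triggered by ``inconsistencies''. The trouble is that the only observable a process has is the register: a run in which two processes claim the same cell and later desynchronise (one answers the head with a stale symbol) produces a perfectly legal-looking sequence of register contents, so no register-triggered fail-safe can fire; such a ghost run can drive the register to the accepting value $(\qhalt,n)$ even when $\calM$ rejects, and your lock would then wrongly yield a negative cut-off. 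Moreover, by the monotonicity of Lemma~\ref{lem:copycat}, no protocol can enforce ``exactly one process per cell'': anything a faithful population can do, a padded one can mimic. Your filter gadget is also doing the wrong job: you use it in the setup phase to guarantee \emph{at least} $n$ participants, but that lower bound comes for free (the head cannot traverse a cell nobody encodes), whereas the bound you actually need, and cannot obtain by local means, is \emph{at most} $n$.

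The paper's construction supplies exactly the missing idea by inverting the roles. It does not try to prevent unfaithful runs from reaching $(\qhalt,n)$; instead every simulation state has an escape to the target whenever the register does not point at its own cell, and upon seeing $(\qhalt,n)$ all simulation participants are funnelled into an $(n{+}1)$-filter $\calF_{n+1}$ whose last state \emph{is} the target $q_f=s_{n+1}$. A run that avoids $q_f$ forever must therefore complete the simulation and park all its participants in the filter with at most $n$ of them (Lemma~\ref{lem-filter}); since writing $(\qhalt,n)$ forces every cell to be covered, hence at least $n$ participants, exactly $n$ processes took part, one per cell, and the simulation was the genuine deterministic computation. Thus unfaithfulness is punished \emph{after the fact} by a count rather than detected on the fly, which is the step your proposal is missing.
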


Write~$n$ for the size of the tape of the Turing machine. We~assume
(without loss of generality) that the machine is deterministic, and that it
accepts only if it ends in its halting state~$q_{\halt}$ while reading the
last cell of the tape. 
Our~reduction works as follows (see~Fig.~\ref{fig-lbTM}): some processes of our network
will first be assigned   an index~$i$ in~$[1;n]$ indicating the cell of the tape they~shall encode during
the simulation. The~other processes are stuck in the initial location, and
will play no role. The~state~$q$ and position~$j$ of the head of the Turing
machine are stored in the register. During the simulation phase, when a
process is scheduled to play, it~checks in the register whether the tape head
is on the cell it encodes, and in that case it performs the transition of the
Turing machine. If the tape head is not on the cell it~encodes, the process
moves to the target location (which we consider as the target for the
almost-sure 
reachability problem). Finally, upon seeing~$(q_{\halt},n)$ in the register,
all~processes move to a $(n+1)$-filter protocol~$\calF_{n+1}$ (similar to that
of Fig.~\ref{fig-filtern}) whose last location~$s_{n+1}$ is the aforementioned
target location.

\usetikzlibrary{snakes}
\begin{figure}[h]
\centering
\begin{tikzpicture}[xscale=1.1]
\begin{scope}
\draw (0,0) node[rond,vert,inner sep=0pt] (qinit) {} node {\pinit};
\draw[latex'-] (qinit.135) -- +(135:3mm);
\draw (-2,0) node[rond,rouge,inner sep=0pt] (qinit') {} node {$\pinit'$};
\draw (-2,-1.3)  node[rond,bleu,inner sep=0pt] (c1) {} node {$1,c_1$};
\draw (-1,-1.3)  node[rond,bleu,inner sep=0pt] (c2) {} node {$2,c_2$};
\draw (0,-1.3)  node[rond,bleu,inner sep=0pt] (c3) {} node {$3,c_3$};
\draw (1,-1.3) node {...};
\draw (2,-1.3)  node[rond,bleu,inner sep=0pt] (cn) {} node {$n,c_n$};
\draw (2,0) node[rond,gris,inner sep=0pt] (qsink) {} node {$\psink$};
\begin{scope}[-latex']
\everymath{\scriptstyle}
\draw (qinit) -- (qinit') node[midway,above] {$R(\#)$};
\draw (qinit) -- (qsink) node[midway,above] {$R(D\setminus\{\#\})$};
\draw (qinit) -- (c2) node[left,midway] {$R(\#)$};
\draw (qinit) -- (c3) node[right,midway] {$R(\#)$};
\draw (qinit) -- (cn) node[right,midway] {$R(\#)$};
\draw (qinit') -- (c1) node[left,midway] {$W(q_0,1)$};
\draw (qsink) edge[out=-25,in=25,looseness=5] (qsink);
\end{scope}
\end{scope}
\begin{scope}[yshift=-2.5cm,xshift=0cm,yscale=1.2]
\draw (-2,0)  node[rond,bleu,inner sep=1pt] (c'1) {} node {$1,\sigma$};
\draw (-2,-1)  node[rounded corners=3.5mm,bleu,inner sep=1pt,minimum size=7mm]
  (c'2) {$1,\sigma,q\phantom'$}; 
\draw (-2,-2)  node[rond,bleu,inner sep=1pt] (c'3) {} node {$1,\sigma'$};
\draw (2,0)  node[rond,bleu,inner sep=1pt] (c''1) {} node {$n,\sigma'$};
\draw (2,-1)  node[rounded corners=3.5mm,bleu,inner sep=1pt,minimum size=7mm]
  (c''2) {$n,\sigma',q''$}; 
\draw (2,-2)  node[rounded corners=3.5mm,bleu,inner sep=1pt,minimum size=7mm] (c''3) {$n,\sigma''$};
\begin{scope}[-latex']
\everymath{\scriptstyle}
\draw[-latex'] (c'1) -- (c'2) node[midway,right] {$R(q,1)$};
\draw[-latex'] (c'2) -- (c'3) node[midway,right] {$W(q',2)$};
\draw[-latex'] (c''1) -- (c''2) node[midway,left] {$R(q'',n)$};
\draw[-latex'] (c''2) -- (c''3) node[midway,left] {$W(q,n-1)$};
\end{scope}
\draw[dashed] (c'1.-135) -- +(-135:3mm);
\draw[dashed] (c'1.-112.5) -- +(-112.5:3mm);
\draw[dashed] (c'1.-45) -- +(-45:3mm);
\draw[dashed] (c''1.-135) -- +(-135:3mm);
\draw[dashed] (c''1.-67.5) -- +(-67.5:3mm);
\draw[dashed] (c''1.-45) -- +(-45:3mm);
\draw[dashed] (c'3.-135) -- +(-135:3mm);
\draw[dashed] (c'3.-112.5) -- +(-112.5:3mm);
\draw[dashed] (c'3.-67.5) -- +(-67.5:3mm);
\draw[dashed] (c'3.-45) -- +(-45:3mm);
\draw[dashed] (c'3.-90) -- +(-90:3mm);
\draw[dashed] (c''3.-135) -- +(-135:3mm);
\draw[dashed] (c''3.-67.5) -- +(-67.5:3mm);
\draw[dashed] (c''3.-112.5) -- +(-112.5:3mm);
\draw[dashed] (c''3.-45) -- +(-45:3mm);
\draw[dashed] (c''3.-90) -- +(-90:3mm);
\foreach \i in {1,2,3,n}
  {\draw[dashed] (c\i.-135) -- +(-135:3mm);
   \draw[dashed] (c\i.-112.5) -- +(-112.5:3mm);
   \draw[dashed] (c\i.-67.5) -- +(-67.5:3mm);
   \draw[dashed] (c\i.-45) -- +(-45:3mm);
   \draw[dashed] (c\i.-90) -- +(-90:3mm);}

\end{scope}
\begin{scope}[yshift=-6.4cm,xshift=-0cm]
\draw (-3.5,0) node[rond,jaune,inner sep=0pt] (s0) {} node {$s_0$};
\draw (-2,0) node[rond,jaune,inner sep=0pt] (s1) {} node {$s_1$};
\draw (-1,0) node[rond,jaune,inner sep=0pt] (s2) {} node {$s_2$};
\draw (2,0) node[rond,jaune,inner sep=0pt,double] (sn1) {} node {$s_{n}$};
\draw (3.5,0) node[rond,jaune,inner sep=0pt,double] (sn) {} node {$s_{n+1}$};
\begin{scope}[-latex']
\everymath{\scriptstyle}
\draw (s0) edge[out=-160,in=-200,looseness=4] node[midway,left] {$W(f_0)$} (s0);
\draw (s0) edge[bend right] node[midway,below] {$R(f_0)$} (s1);
\draw (s1) edge[bend right] node[midway,below] {$W(f_1)$} (s0);
\draw (s1) edge[bend right] node[midway,below] {$R(f_1)$} (s2);
\draw (s2) edge[out=150,in=40] node[pos=.1,above] {$W(f_2)$} (s0);
\draw[-,dashed] (s2) edge[bend right] node[midway,below] {$R(f_2)$} +(.8,-.2);
\draw[-,dashed] ($(sn1)+(-.8,-.2)$) edge[bend right,-latex']
node[midway,below] {$R(f_{n-1})$} (sn1); 
\draw[-] (sn1) edge[bend right,-latex'] node[midway,below] {$R(f_n)$} (sn);
\draw (sn) edge[out=-20,in=20,looseness=4] (sn);
\end{scope}
\end{scope}
\begin{scope}
\everymath{\scriptstyle}
\draw[-latex',rounded corners=1mm] (c'1) -| (s0)
  node[pos=.9,left] {$\genfrac{}{}{0pt}{1}{R(\qhalt,n)}{R(f_i), i\in[0;n]}$};
\draw[-latex',rounded corners=1mm] (c'3) -| (s0);
\draw[-latex',rounded corners=1mm] (c''1) -| (sn) node[pos=.95,right] 
  {$\genfrac{}{}{0pt}{1}{R(\cdot,j), j\not=n}{R(\#)}$};
\draw[-latex',rounded corners=1mm] (c''3) -| (sn);
\draw[dashed,-latex'] (c'1) -- +(.7,0) node[right,font=\small] {(to~$s_{n+1}$)};
\draw[dashed,-latex'] (c'3) -- +(.7,0) node[right,font=\small] {(to~$s_{n+1}$)};
\draw[dashed,-latex'] (c''1) -- +(-.7,0) node[left,font=\small] {(to~$s_{0}$)};
\draw[dashed,-latex'] (c''3) -- +(-.7,0) node[left,font=\small] {(to~$s_{0}$)};
\end{scope}
\begin{scope}[xshift=-7mm]
\draw[snake=brace] (5,0.4) -- (5,-.8) node[midway,right=3mm,font=\small] {initialization phase};
\path[use as bounding box] (5,0);
\draw[snake=brace] (5,-1.3) -- (5,-5.2) node[midway,right=3mm,text width=4cm,font=\small]
     {simulation phase\par
       (for transitions \par $(q,\sigma) \to (q',\sigma',+1)$ and \par
      $(q'',\sigma') \to (q,\sigma'',-1)$)};
\draw[snake=brace] (5,-6) -- (5,-6.8) node[midway,right=3mm,text width=4.2cm,font=\small]
     {counting phase};
   \end{scope}
 \end{tikzpicture}
\caption{Distributed protocol $\calP_\calM$ encoding the linear-bounded Turing
  machine~$\calM$.} 
\label{fig-lbTM}
\end{figure}

If the Turing machine halts, then the corresponding run can be mimicked with
exactly one process per cell, thus giving rise to a finite run of the
distributed system where $n$ processes end up in the $(n+1)$-filter (and the
other processes are stuck in the initial location); from there $s_{n+1}$ cannot
be reached.
If the Turing machine does not halt, then assume that there is an infinite run
of the distributed system never reaching the target location. This run cannot
get stuck in the simulation phase forever, because it would end~up in a
strongly connected component from which the target location is reachable. Thus
this run eventually reaches the $(n+1)$-filter, which requires that at
least~$n+1$ processes participate in the simulation (because with $n$
processes it would simulate the exact run of the machine, and would not
reach~$q_{\halt}$, while with fewer processes the tape head could not go over
cells that are not handled by a process). Thus at~least $n+1$ processes would end up in
the $(n+1)$-filter, and with probability~$1$ the target location should be
reached. 

\smallskip
We~now formalize this construction, by describing the states and
transitions of the protocol within these three phases.  We fix a
linear-bounded Turing machine $\mathcal{M} =
(Q,q_0,\qhalt,\Sigma,\delta)$, where $Q$ is the set of states, $q_0,
\qhalt \in Q$ are the initial and halting states, $\Sigma$ is the
alphabet, and $\delta \subseteq Q \times \Sigma \times Q \times \Sigma
\times \{-1,+1\}$ is the set of transitions.
We~define the data alphabet $D=\{\#\}\uplus Q\times\Sigma \uplus \{f_i\mid
  0\leq i\leq n\}$, and the set of locations $P=\{\pinit,\pinit',\psink\}
  \uplus \bigl([1;n]\times \Sigma\times (Q\cup\{\epsilon\})\bigr) \uplus \{s_i\mid 0\leq i\leq
  n+1\}$. The~set of locations corresponds to three phases (see~Fig.~\ref{fig-lbTM}):
\begin{itemize}
\item The initialization phase contains $\pinit$, $\pinit'$ and $\psink$.
  From~the initial state~$\pinit$, upon reading~$\#$ (the~initial content of
  the register), the protocol has
  transitions to each state~$(i,\sigma_i)$ for all~$2\leq i\leq n$, where $\sigma_i$
  is the $i$-th letter of the initial content of the tape. If~reading anything
  different from~$\#$, the protocol moves to the sink state~$\psink$. Finally,
  there are transitions 
  $(\pinit,r(\#),\pinit')$ and $(\pinit',w(q_0,1),(1,c_1))$, where $q_0$ is the
  initial state of the Turing machine: this~pair of transitions is used to
  initialize the computation, by setting the content of the first cell and
  modifying the register, so that the initialization phase is over
  (there are no transitions writing~$\#$ in the register).
\item The second phase, called \emph{simulation phase}, uses register
  alphabet $Q \times [1,n]$, in~order to encode the state and position of the
  head of the Turing machine. The~state space for the simulation phase is
  $[1;n]\times \Sigma\times (Q\cup\{\epsilon\})$:
  state~$(i,\sigma,\epsilon)$ (written $(i,\sigma)$ in the sequel) encodes the
  fact that the content of the 
  $i$-th cell is~$\sigma$; the~states of the form $(i,\sigma,q)$ are
  intermediary states used during the simulation of one transition:
  when in state~$(i,\sigma)$ and reading~$(q,i)$ in the
  register, the~protocol moves to~$(i,\sigma,q)$, from which it~moves
  to~$(i,\sigma')$ and writes~$(q',j)$ in the register,
  provided that the machine has a transition
  $(q,\sigma)\to(q',\sigma',j-i)$.  If~the active process does not
  encode the position that the tape head is reading (i.e., the process
  is in state~$(i,\sigma)$ and reads~$(q,j)$ with~$j\not=i$)
  then it moves to the final state~$s_{n+1}$ of the third phase.
\item The role of the \emph{counting phase} is to count the number of processes
  participating in the simulation. When seeing the halting state in the
  register, each protocol moves to a module whose role is to check whether at
  least $n+1$ protocols are still ``running''. This uses data $\{f_i \mid
  0\leq i \leq n\}$ and states $\{s_i \mid i\in[0,n+1]\}$, with
  transitions from any state of the simulation phase to $s_0$ if the register
  contains $(\qhalt,n)$ or any of $\{f_i \mid 0\leq i\leq n\}$.
\end{itemize}

We~now prove that our construction is correct:
\begin{lemma}
  The register protocol $\calP_\calM$, with initial register content~$\#$ and
  target location~$s_{n+1}$, has a negative cut-off if, and only if, the Turing
  machine~$\mathcal{M}$ reaches~$\qhalt$ in the last cell of the tape. 
\end{lemma}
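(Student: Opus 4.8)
The statement is an equivalence, and I would prove the two implications separately. In both directions I would work in the finite Markov chain $\Sys_{\calP_\calM}^h$ obtained by fixing the number~$h$ of processes, and use the criterion of Lemma~\ref{lem:prob-pre-post}: almost-sure reachability of $s_{n+1}$ from the size-$h$ initial configuration is equivalent to the reachability of a configuration containing $s_{n+1}$ from \emph{every} reachable configuration.

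First I would handle ``$\calM$ halts $\Rightarrow$ negative cut-off''. The plan is to exhibit, for every $h\ge n$, one finite run of positive probability, starting from the configuration with $h$ copies of $\pinit$ and register~$\#$, that ends in a configuration from which $s_{n+1}$ is unreachable. The run first performs the initialization so that exactly one process lands on each cell $1,\dots,n$: one process takes the branch $\pinit\to\pinit'\to(1,c_1)$, writing $(q_0,1)$, after which the remaining $h-n$ processes read a value other than~$\#$ and are parked forever in~$\psink$. It then schedules, at each step, precisely the unique process sitting on the cell currently designated by the head stored in the register. Since $\calM$ halts, this faithfully mimics the accepting computation, writes $(\qhalt,n)$ after finitely many steps, and moves exactly the $n$ active processes into the filter at~$s_0$. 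As only $n<n+1$ of them enter $\calF_{n+1}$, the argument of Lemma~\ref{lem-filter} (applied to $\calF_{n+1}$) shows that $s_{n+1}$ can never be reached from there, while the sink processes stay idle; hence the probability of reaching $s_{n+1}$ is strictly below~$1$ for all $h\ge n$, i.e.\ there is a negative cut-off.

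For the converse, ``$\calM$ does not halt $\Rightarrow$ positive cut-off'', I would argue that for all sufficiently large~$h$ the target is reached almost surely, i.e.\ that $s_{n+1}$ is reachable from every reachable configuration of $\Sys_{\calP_\calM}^h$ (with Theorem~\ref{thm-cutoff} this already excludes a negative cut-off). The engine of the argument is that, throughout the simulation phase, a jump to $s_{n+1}$ is almost always enabled: as soon as the register holds an ordinary simulation value $(q,j)$ and some process sits on a cell $i\ne j$, scheduling that process sends it directly to~$s_{n+1}$. I would first clear the easy configurations\,---\,an initialization configuration can be driven into the simulation phase; a configuration with a cell process off the head yields the jump at once; and a configuration in which every cell process sits exactly on the head admits one faithful simulation step, after which the head has moved and all those processes are now off the head (if the simulation cannot progress, the completion transitions forced by deadlock-freeness can be routed to $s_{n+1}$, so we may assume $\calM$ loops on rejection). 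The only remaining threat is a configuration that has entered the counting phase with fewer than $n+1$ processes in the filter and the others stuck in~$\psink$, from which $s_{n+1}$ would indeed be unreachable.

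Ruling out this last case is the crux. I would formulate an invariant describing the \emph{safe} configurations\,---\,those reachable along a run on which no jump to $s_{n+1}$ has ever been available\,---\,and prove that they are exactly the faithful one-process-per-cell snapshots of the genuine computation of~$\calM$. Since $\calM$ never reaches $\qhalt$ on the last cell, no safe configuration ever carries the register value $(\qhalt,n)$; therefore that value, and with it any entry into $\calF_{n+1}$, can only be produced after a desynchronization\,---\,two processes on one cell, or a cell left unmanned\,---\,which either leaves a stale process off the current head (so the direct jump to $s_{n+1}$ is still enabled) or forces at least $n+1$ processes to participate, so that at least $n+1$ of them enter $\calF_{n+1}$ and Lemma~\ref{lem-filter} again delivers~$s_{n+1}$. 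In every case $s_{n+1}$ stays reachable, closing the reachability-from-everywhere argument. I expect this desynchronization-versus-counting invariant to be the main technical obstacle, since it must simultaneously certify that a jump-free run keeps the simulation perfectly faithful and that any deviation commits enough processes to the filter.
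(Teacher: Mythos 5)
Your first implication (``$\calM$ halts $\Rightarrow$ negative cut-off'') is correct and is exactly the paper's argument: one process per cell, the rest parked in $\psink$, a faithful simulation writing $(\qhalt,n)$, and only $n$ processes entering $\calF_{n+1}$, which by Lemma~\ref{lem-filter} cannot produce $s_{n+1}$. The problem is in the converse, and specifically in the invariant you rightly identify as the crux but leave unproven: as formulated, it is false. You define the \emph{safe} configurations as those reachable along a run on which no jump to $s_{n+1}$ has ever been available, and claim these are exactly the faithful one-process-per-cell snapshots. But a process can only enter a cell state $(i,\sigma_i)$ by reading $\#$, and since reads do not modify the register, immediately afterwards that same process can read $\#$ again and jump to $s_{n+1}$ (the protocol has $R(\#)$-edges from every simulation state to $s_{n+1}$). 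So every run that populates even one cell passes through a configuration where a jump is available, and your safe set contains no faithful snapshot at all. More fundamentally, ``no jump was ever available along the history'' is not the relevant notion: what must be analyzed are configurations $\gamma$ from which $s_{n+1}$ is \emph{unreachable}, and the run leading to such a $\gamma$ may well have had jumps available that were simply not taken.

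The paper closes this differently, and you may want to adopt its bookkeeping. Take a finite witness run $\rho$ none of whose continuations visits $s_{n+1}$ (which exists in the finite Markov chain if the probability is below $1$). First, since after every write $((i,\sigma,q) \xrightarrow{W(q',i\pm1)} (i,\sigma'))$ the escape $R(\cdot,j)\to s_{n+1}$ is enabled, at the end of $\rho$ no process can still be able to perform a simulation step, so every process that ever entered the simulation phase is forced into $s_0$. Lemma~\ref{lem-filter} then caps the number of such processes at $n$ (otherwise $n+1$ of them reach $\calF_{n+1}$ and $s_{n+1}$ becomes reachable). Conversely, writing $(\qhalt,n)$ requires, for each $i$, at least one process that visited some $(i,\sigma_i)$, and a process never changes its cell index, so at least $n$ processes entered the simulation. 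The two bounds force exactly one process per cell, hence a faithful simulation, hence $\calM$ halts on the last cell. This counting argument replaces your ``desynchronization commits at least $n+1$ processes to the filter'' claim, which as stated also misses the case of an unmanned cell (there the simulation simply deadlocks and the stuck processes can only escape to $s_{n+1}$ directly, not via the filter). Until the invariant is replaced by something of this form, the converse direction has a genuine gap.
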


\begin{proof}
  First assume that there is a negative cut-off: there~exists $N_0$
  such that for any $N\geq N_0$, starting from the initial
  configuration~$\tuple{\pinit^N,\#}$ of the system~$\Sys_{\Prot_\calM}^{N}$ made of $N$~copies
  of~$\Prot_\calM$, the~probability that at least one process
  reaches~$s_{n+1}$ is strictly less than~$1$. 
  Since $\Sys_{\Prot_\calM}^{N}$ is a finite Markov
  chain, this implies that there is a cone of executions never
  visiting~$s_{n+1}$, i.e., a~finite execution~$\rho$ whose
  continuations never visit~$s_{n+1}$. 
  Since the register initially
  contains~$\#$, this finite execution (or~a~finite continuation of~it) must contain at least
  one configuration where some process has entered the simulation part.

  Now, in the simulation phase, we notice that, right after taking a transition
  $((i,\sigma,q),\penalty1000\relax w(q',{i\pm 1}),\penalty1000\relax (i,\sigma'))$, the~transition
  $((i,\sigma'),r(\cdot,j),s_{n+1})$ is always enabled. It~follows that
  at the end of the finite run~$\rho$, no~simulation transition should be
  enabled; hence all processes that had entered the simulation part
  should have left~it. Hence some process must have visited~$s_0$
  along~$\rho$ (because we~assume that $\rho$ does not involve~$s_{n+1}$).
  Moreover, by~Lemma~\ref{lem-filter}, 
  for~$s_{n+1}$ not to be reachable along 
  any continuation of~$\rho$, no~more than $n$ processes must be able to
  reach~$s_0$ along any continuation of~$\rho$, hence at most~$n$ processes
  may have entered the simulation phase.
  On~the other hand, for~$s_0$ to be visited, some process has to first write
  $(\qhalt,n)$ in the register; since the register initially
  contains~$(q_0,1)$, and no process can write $(\cdot{},i+1)$ without
  first reading~$(\cdot{},i)$, then for each $i\in[1,n]$ there must be at
  least one process visiting some state~$(i,\sigma_i)$, for some~$\sigma_i$;
  It~follows that at least $n$~processes must have entered the simulation
  phase. 

  In~the end, along~$\rho$, exactly one process visits $(i,c_i)$, for each
  $i\in [1,n]$, and encode the content of the $i$-th cell. As~a consequence,
  along~$\rho$, each cell of the tape of the Turing machine is encoded by
  exactly one process, and the execution mimics the exact computation of the
  Turing machine. Since the configuration~$(\qhalt,n)$ is eventually reached,
  the Turing machine halts with the tape head on the last cell of the
  tape.

  Conversely, assume the Turing machine halts, and consider the
  execution of $N\geq n$ processes where exactly one process goes in
  each of the $(i,c_i)$ and mimics the run of the Turing
  machine (the~other processes going to~$\psink$). We~get a finite execution
  ending up in a configuration 
  where all processes are either in~$\pinit$ or in~$\psink$, except
  for $n$ processes that are in the counting phase.
  No~continuation of this prefix ever reaches~$s_{n+1}$, so that the
  probability that some process reaches~$s_{n+1}$ is strictly less
  than~$1$.
\end{proof}

\section{Bounds on cut-offs}
\label{sec-bounds}

\subsection{Existence of exponential tight negative cut-offs}
We exhibit a family of register protocols that admits  negative
cut-off exponential in the size of the protocol. The~construction reuses ideas from the \PSPACE-hardness proof. Our
register protocol has two parts: one part simulates a counter over $n$ bits,
and requires a \emph{token} (a~special value in the register) to perform each
step of the simulation. The~second part is used to generate the tokens
(i.e.,~writing~$1$ in the register). Figure~\ref{fig-expo} depicts our
construction.
We~claim that this protocol, with~$\#$
as initial register value and~$q_f$ as~target location, admits a
negative tight cut-off larger than~$2^n$: in~other terms, there
exists~$N>2^n$ such that the final state will be reached with probability
strictly less than~$1$ in the
distributed system made of at least~$N$ processes (starting with~$\#$ in the
register), while the distributed system with $2^n$ processes will reach the
final state almost-surely.
In~order to justify this claim, we explain now the intuition behind this protocol.

\begin{figure}[htb]
\centering
\begin{tikzpicture}[yscale=.9]
\draw (0,0.25)  node[rond,vert,inner sep=0pt] (a) {$\init$};
\draw[latex'-] (a.180) -- +(180:3mm);
\draw (9,0.25) node[rond,violet,inner sep=0pt] (b) {$\token$};
\draw (9,-2) node[rond,violet,inner sep=0pt] (c) {$\sent$};
\draw (9,-4.25) node[rond,violet,inner sep=0pt] (d) {$\sink$};
\draw[-latex'] (a) -- (b);
\draw[-latex'] (b) -- (c) node[midway,right] {$\scriptstyle W(1)$};
\draw[-latex'] (c) -- (d) node[midway,right] {$\scriptstyle R(\halt)$};
\begin{scope}[xshift=2cm,yshift=-7.5mm]
\draw[gris,dashed,rounded corners=2mm] (5,.5) -| +(.6,-3.5) --
  +(-.75,-3.5) |- +(0,0) node[pos=.2,coordinate] (yn) {};
\foreach \i/\j/\n in {0/2/1,2/3/2,5/n}
  {%
    \if n\n\else
      \draw[gris,dashed,rounded corners=2mm] (\i,.5) -| +(.75,-4.75) --
        +(-.75,-4.75) node[midway,coordinate] (z\n) {} |- +(0,0);\fi
    \draw (\i,0) node[rond,bleu,inner sep=0pt] (a\n) {$a_{\n}$};
    \draw (\i,-1.25) node[rond,bleu,inner sep=0pt] (b\n) {$b_{\n}$};
    \draw (\i,-2.5) node[rond,bleu,inner sep=0pt] (c\n) {$c_{\n}$};
    \draw (\i,-3.75) node[rond,bleu,inner sep=0pt] (d\n) {$d_{\n}$};
    \everymath{\scriptstyle}
    \draw[-latex'] (a\n) -- (b\n) node[pos=.8,left] {$R(\n)$};
    \draw[-latex'] (b\n) -- (c\n) node[pos=.8,left] {$W(0)$};
    \draw[-latex'] (c\n) -- (d\n) node[pos=.8,left] {$R(\n)$};
    \if n\n\else
    \draw[-latex',rounded corners=2mm] (d\n) -- ++(.5,.5) --
      ($(a\n)+(.5,-.5)$) node[midway,fill=fgris,minimum height=3.5mm] {} 
      node[midway,right=-2.5mm] {$W(\j)$} -- (a\n);\fi
  }
\draw[-latex',rounded corners=2mm] (a) -- ($(an)+(-.5,1)$) -- (an);
\draw[-latex',rounded corners=2mm] (a) -- ($(a2)+(-.5,1)$) -- (a2);
\draw[dashed,rounded corners=2mm] (a) -- ($(4,0)+(-.5,1)$) -- (3.75,.5);
\draw[-latex',rounded corners=2mm] (a) -- ($(a1)+(-.5,1)$) 
  node[pos=.8,above] {$\scriptstyle R(\#)$} -- (a1);
\end{scope}
\begin{scope}[yshift=-6.5cm,xshift=0cm]
\draw (0,0) node[rond,jaune,inner sep=0pt] (s0) {} node {$s_0$};
\draw (2,0) node[rond,jaune,inner sep=0pt] (s1) {} node {$s_1$};
\draw (4,0) node[rond,jaune,inner sep=0pt] (s2) {} node {$s_2$};
\draw (7,0) node[rond,jaune,inner sep=0pt,double] (sn1) {} node {$s_{n}$};
\draw (9,0) node[rond,rouge,inner sep=0pt,double] (sn) {} node {$q_f$};
\begin{scope}[-latex']
\everymath{\scriptstyle}
\draw (s0) edge[out=-160,in=-200,looseness=4] node[midway,left] {$W(f_0)$} (s0);
\draw (s0) edge[bend right] node[midway,below] {$R(f_0)$} (s1);
\draw (s1) edge[bend right] node[midway,below] {$W(f_1)$} (s0);
\draw (s1) edge[bend right] node[midway,below] {$R(f_1)$} (s2);
\draw (s2) edge[out=150,in=40] node[pos=.1,above] {$W(f_2)$} (s0);
\draw[-,dashed] (s2) edge[bend right] node[midway,below] {$R(f_2)$} +(.8,-.2);
\draw[-,dashed] ($(sn1)+(-.8,-.2)$) edge[bend right,-latex']
node[midway,below] {$R(f_{n-1})$} (sn1); 
\draw[-] (sn1) edge[bend right,-latex'] node[midway,below] {$R(f_n)$} (sn);
\end{scope}
\end{scope}
\draw[-latex',rounded corners=2mm] (c) -- +(-.5,-.5) --
  ($(sn)+(-.5,.5)$)  node[pos=.85,right] {$\scriptstyle R(m), m\not=\halt$} -- (sn);

\draw[-latex',rounded corners=1mm] (a1) -- +(.25,.4) -- ($(a1)+(.9,.4)$)
  node {$\genfrac{}{}{0pt}{1}{R(i)}{i\not=1}$} -- ($(an)+(1,.4)$)
   -- +(0,-6) node[pos=.1,right=-2pt] {$\scriptstyle R(\#)$} -- (sn);
\draw[rounded corners=1mm] (a2) -- +(.25,.4) -- ($(a2)+(.9,.4)$)
  node {$\genfrac{}{}{0pt}{1}{R(i)}{i\not=2}$} -- ($(an)+(.5,.4)$);
\draw[rounded corners=1mm] (an) -- +(.25,.4) -- ($(an)+(.65,.4)$)
  node {$\genfrac{}{}{0pt}{1}{R(i)}{i\not=n}$} -- ($(an)+(.9,.4)$);
\draw[rounded corners=1mm] (c1) -- +(.25,-.4) -- ($(c1)+(.9,-.4)$)
  node {$\genfrac{}{}{0pt}{1}{R(i)}{i\not=1}$} -- ($(cn)+(1,-.4)$) -- +(0,-.5);
\draw[rounded corners=1mm] (c2) -- +(.25,-.4) -- ($(c2)+(.9,-.4)$)
  node {$\genfrac{}{}{0pt}{1}{R(i)}{i\not=2}$} -- ($(cn)+(.5,-.4)$);
\draw[rounded corners=1mm] (cn) -- +(.25,-.4) -- ($(cn)+(.65,-.4)$)
  node {$\genfrac{}{}{0pt}{1}{R(i)}{i\not=n}$} -- ($(cn)+(.5,-.4)$);
\draw[-latex',rounded corners=2mm] (z1) |- +(-1,-.2) -| (s0.110)
  node[pos=0.3,above] {$\genfrac{}{}{0pt}{1}{R(\halt)}{R(f_i), i\in[0,n]}$};
\draw[rounded corners=2mm] (z2) |- +(-3,-.2);
\draw[rounded corners=2mm] (yn) -| +(-.2,-1) |- ($(z2)+(-1,-.2)$);
\draw[-latex,rounded corners=2mm] (dn) |- ($(z2)+(0,-.4)$) node[below,pos=.8]
  {$\scriptstyle W(\halt)$} -| (s0.70);

\end{tikzpicture}
\caption{Simulating an exponential counter: 
  grey boxes contain the nodes used to encode the bits of the counter; 
  yellow nodes at the bottom correspond to the filter module from
  Fig.~\ref{fig-filtern}; 
  purple nodes \token, \sent and \sink correspond to the second part of the
  protocol, and are used to produce tokens.
  Missing \emph{read} edges are assumed to be self-loops.
}
\label{fig-ex-exp2}\label{fig-expo}
\end{figure}

We first focus on the first part of the protocol, containing nodes
named~$a_i$, $b_i$, $c_i$, $d_i$ and~$s_i$. This part can be divided into
three phases: the~initialization phase lasts as long as the register
contains~$\#$; the counting phase starts when the register 
contains~$\halt$ for the first time; the~simulation phase is the intermediate phase.

During the initialization phase, processes move to locations~$a_i$
and~$\token$, until some process in~\token writes~$1$ in the register
(or~until some process reaches~$q_f$, using a transition from~$a_i$ to~$q_f$
while reading~$\#$). 
Write~$\aconf_0$ for the configuration reached when entering the simulation
phase (i.e., when~$1$ is written in the register for the first time).
We~assume that $\state{\aconf_0}(a_i)>0$ for some~$i$, as otherwise all the
processes are in~$\token$, and they all will eventually reach~$q_f$. Now, we
notice that if $\state{\aconf_0}(a_i)=0$ for some~$i$, then location~$d_n$ cannot
be reached, so that no process can reach the counting phase. In~that case,
some process (and~actually all of~them) will eventually reach~$q_f$.
We~now~consider the case where $\state{\aconf_0}(a_i)\geq 1$ for all~$i$.
One~can
prove (inductively) that $d_i$~is reachable when
$\state{\aconf_0}(\token)\geq 2^i$. Hence $d_n$, and thus also~$s_0$, can be reached when
$\state{\aconf_0}(\token)\geq 2^n$.
Assuming $q_f$ is not reached, the counting phase must never contain more than
$n$ processes, hence we actually have that $\state{\aconf_0}(a_i)= 1$.
With this new condition, $s_0$ is reached if, and only~if,
$\state{\aconf_0}(\token)\geq 2^n$.
When the latter condition is not true,
$q_f$ will be reached almost-surely, which proves the second part of our
claim: the~final location is reached almost-surely in systems with strictly
less than~$n+2^n$ copies of the protocol. 

We now consider the case of systems with at least $n+2^n$ processes.
We~exhibit a finite execution of those systems from which no continuation can
reach~$q_f$, thus proving that $q_f$ is reached with probability strictly less
than~$1$ in those systems. The~execution is as follows: during initialization,
for each~$i$, one process enters~$a_i$; all other processes move to~$\token$,
and one of them write~$1$ in the register. The $n$ processes in the simulation
phase then simulate the consecutive incrementations of the counter, consuming
one token at each step, until reaching~$d_n$. At~that time, all the processes
in~\token move to~\sent, and the process in~$d_n$ writes~$\halt$ in the
register and enters~$s_0$. The processes in the simulation phase can then
enter~$s_0$, and those in~$\sent$ can move to~$\sink$. We~now have $n$
processes in~$s_0$, and the other ones in~\sink. According to
Lemma~\ref{lem-filter}, location~$q_f$ cannot be reached from this
configuration, which concludes our proof.

\begin{theorem}
There exists a family of register protocols which, equipped with an
initial register value and a target location, admit negative tight cut-offs
whose  size are exponential in the size of the protocol.
\end{theorem}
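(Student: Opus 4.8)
The plan is to make the informal discussion preceding the statement rigorous by analysing the protocol of Fig.~\ref{fig-expo}, which I denote~$\Prot$. Its size is linear in~$n$ (there are $O(n)$ control locations and transitions over a data alphabet of size $O(n)$), so it suffices to prove that $\Prot$, with initial register value~$\#$ and target~$q_f$, has a \emph{tight} negative cut-off equal to $n+2^n$, which is exponential in~$\Msize{\Prot}$. Following the definition of a tight cut-off, I would establish two complementary facts: (i) whenever $k< n+2^n$, starting from $k$ processes all in~$\init$ with register~$\#$ the target~$q_f$ is reached almost-surely; and (ii) whenever $k\geq n+2^n$ it is reached with probability strictly less than~$1$. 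Together these make $n+2^n$ a negative cut-off while witnessing that $n+2^n-1$ is not one, pinning down the tight value.

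The technical backbone, and the step I expect to be the main obstacle, is a precise accounting of how many tokens (processes that have written~$1$, i.e.\ sitting in~\token) the counter gadget consumes. I would isolate it as a lemma: \emph{assuming $\aconf_0$ has exactly one process in each of $a_1,\dots,a_n$, location~$d_i$ is reachable from~$\aconf_0$ if, and only if, $\state{\aconf_0}(\token)\geq 2^i$}. The proof is by induction on~$i$, reading the gadget as a binary counter: completing one increment at bit~$i$ (the path $a_i\to b_i\to c_i\to d_i$, closed by the carry write to level~$i+1$) forces a full low-order sweep costing $2^{i-1}$ tokens followed by a carry costing another $2^{i-1}$, hence the doubling. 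The delicate point is the \emph{necessity} direction together with the non-atomic read/write semantics: I must check that no increment can be left ``half-done'' undetected, using that the escape transitions $R(j),\,j\neq i$ out of $a_i$ and $c_i$ observe every intermediate register value, so a process stranded mid-increment is diverted to~$q_f$ rather than silently contributing to a later carry.

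For direction~(i), fix a run and let $\aconf_0$ be the configuration reached when~$1$ is first written (entry into the simulation phase). If some~$a_i$ is empty in~$\aconf_0$, the carry chain is broken, $d_n$ is unreachable, and inspection of the gadget shows every process can escape to~$q_f$; as this occurs within a bottom strongly connected component, $q_f$ is reached almost-surely. Otherwise every~$a_i$ is occupied; by Lemma~\ref{lem-filter} applied to the filter $\calF_{n+1}$ (with target~$q_f$), at least $n+1$ processes must enter the counting phase to reach~$q_f$, so for $q_f$ to be avoidable at most~$n$ processes may enter it, forcing exactly one process in each~$a_i$ and thus using $n$ processes in total. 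The remaining $\state{\aconf_0}(\token)$ processes are the available tokens, and $k< n+2^n$ yields $\state{\aconf_0}(\token)<2^n$. By the counter lemma, $d_n$\,---\,hence $s_0$ and the entire counting phase\,---\,is unreachable, so all processes end up escaping and $q_f$ is reached almost-surely.

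For direction~(ii), given $k\geq n+2^n$ I would exhibit the explicit finite execution sketched before the statement and argue that no continuation reaches~$q_f$. Route one process into each~$a_i$, send the other $\geq 2^n$ processes to~\token and have one write~$1$; let the $n$ counter processes perform the $2^n$ increments (sufficiency direction of the lemma), consuming one token per step, until one reaches~$d_n$; then write~$\halt$, move these $n$ processes through~$s_0$ and all spent tokens into~\sink. The resulting configuration has exactly~$n$ processes in the filter and the rest idle in~\sink, so by Lemma~\ref{lem-filter} (positive cut-off $n+1$) location~$q_f$ is unreachable from it. This finite prefix therefore generates a cone of $q_f$-avoiding runs of positive probability, whence $\pr(\cdot,\Objective{q_f})<1$. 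Combining (i) and (ii) gives the tight negative cut-off $n+2^n$, which is exponential in~$\Msize{\Prot}$, as claimed.
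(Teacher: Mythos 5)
Your proposal is correct and follows essentially the same route as the paper: it analyses the protocol of Fig.~\ref{fig-expo}, establishes the same inductive token-counting lemma (reaching $d_i$ requires $2^i$ tokens once each $a_i$ holds exactly one process), performs the same case split on the configuration at the first write of~$1$, and exhibits the same witness execution ending with $n$ processes trapped in the filter, invoking Lemma~\ref{lem-filter} exactly as the paper does. The only difference is presentational: you make explicit the tight value $n+2^n$ and the ``if and only if'' form of the counter lemma, which the paper states only informally.
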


\begin{remark}
The question whether there exists protocols with exponential 
\emph{positive} cut-offs remains open. The family of \emph{filter} protocols
described at Section~\ref{subsec:example} is an example of protocols
with a 
linear positive cut-off.
\end{remark}

\subsection{Upper bounds on tight cut-offs}

The results (and proofs) of Section~\ref{sec-algo} 
can be used to derive upper bounds on tight cut-offs. 
We~make this explicit in the~following theorem.

\begin{theorem}
\label{thm:bounds}
  For a protocol $\Prot=\tuple{Q,D,q_0,T}$ equipped with an initial register
  value~$d_0\in D$ and a target location~$q_f\in Q$, the tight cut-off
  is at most doubly-exponential in $\Msize \Prot$.
\end{theorem}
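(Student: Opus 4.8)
The plan is to read off the tight cut-off from the explicit cut-off values produced in Lemmas~\ref{lem:negative-cut-off} and~\ref{lem:positive-cut-off}, and then to bound each of them by a double-exponential in~$\Msize\Prot$. Whether the cut-off is negative or positive is decided by whether $\Post^\ast(\upcl\set{\tuple{q_0,d_0}})$ is included in $\Pre^\ast(\ConfsF)$ modulo single-state incrementation; this is a complementary dichotomy, and the two cases produce, respectively, the negative cut-off $\max_{1\le i\le n}\Msize{\state{\theta_i}}$ (Lemma~\ref{lem:negative-cut-off}) and the positive cut-off $\max_{1\le i\le n}(\Msize{\state{\theta_i}}+\sum_{q\in\support{\theta_i}}d_{i,q})$ (Lemma~\ref{lem:positive-cut-off}). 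Since the tight cut-off is, by definition, at most any cut-off of the (necessarily unique) correct sign, it~suffices to bound these two expressions. Throughout, I~will use that, by the proof of Theorem~\ref{thm:expspace-membership}, the coordinates of the minimal elements $\eta_j$ of $\Pre^\ast(\ConfsF)$ satisfy $\state{\eta_j}(q)\le K\le L$ with $L\in 2^{\calO(\Msize Q\cdot\Msize D)^{\calO(\Msize Q+\Msize D)}}$ double-exponential in~$\Msize\Prot$.

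For the negative case I~would not even go through the $\theta_i$, but reuse the proof of Lemma~\ref{lem:algo} directly, as announced in the remark following it. There, a~node $\tuple{\kappa,S,d}$ of the symbolic graph of index $K\cdot\Msize Q$ from which no $q_f$-configuration is reachable is realised, via Lemma~\ref{lemma-gsymb}, as a concrete configuration reachable from $\tuple{q_0^{K\cdot\Msize Q}\Mcup q_0^{N},d_0}$; by the monotonicity of Lemma~\ref{lem:copycat} this bad configuration persists for every larger population, so $K\cdot\Msize Q+N$ is already a negative cut-off. As the symbolic graph has at most $(K\cdot\Msize Q+1)^{\Msize Q}\cdot 2^{\Msize Q}\cdot\Msize D$ nodes, the~replayed symbolic path, and hence the number~$N$ of abstract processes it mobilises, is bounded by a function of this node count; together with $K\le L$ this makes $K\cdot\Msize Q+N$ double-exponential.

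For the positive case I~use Lemma~\ref{lem:positive-cut-off} and reduce everything to a bound on the sizes $\Msize{\state{\theta_i}}$ of the minimal elements of $\Post^\ast(\upcl\set{\tuple{q_0,d_0}})$. Indeed $d_{i,q}=\max_j|\state{\eta_j}(q)-\state{\theta_i}(q)|\le\max(K,\state{\theta_i}(q))\le\max(L,\Msize{\state{\theta_i}})$, so $\sum_{q\in\support{\theta_i}}d_{i,q}\le\Msize Q\cdot\max(L,\Msize{\state{\theta_i}})$, and the whole expression is double-exponential as soon as the $\Msize{\state{\theta_i}}$ are. To bound the latter I~would use that the population is preserved along transitions: a~minimal element $\theta_i$ of size $k_i$ is reachable exactly from $\tuple{q_0^{k_i},d_0}$, and minimality for~$\preceq$ means that no reachable configuration with register $\data{\theta_i}$ and support $\support{\theta_i}$ is coordinatewise smaller. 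A~Rackoff-style argument then bounds $k_i$: a~witnessing run that cannot be shortened without strictly decreasing some coordinate (which minimality forbids) has length at most the coverability-length bound~$L$, and such a run mobilises at most~$L$ processes.

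The~main obstacle is exactly this last bound. Rackoff's results apply to \emph{coverability}, i.e.\ to upward-closed targets, whereas ``support exactly~$S$'' forbids occupation of the states outside~$S$ and is not upward closed, so the argument above cannot appeal to coverability verbatim. I~expect to resolve this by encoding the support constraint into the control part of the associated VAS with states\,---\,just as the register content is encoded in the proof of Theorem~\ref{thm:expspace-membership}\,---\,and applying Rackoff's length bound to the reachability of a non-reducible support-witness there; equivalently, one may track the finitely many minimal profiles (guaranteed by Dickson's lemma) in a symbolic graph of sufficiently large index. Once $\max_i\Msize{\state{\theta_i}}$ is bounded by some double-exponential~$L'$, both cut-off expressions are at most $L'+\Msize Q\cdot\max(L,L')$, and the tight cut-off is double-exponential in~$\Msize\Prot$, as claimed.
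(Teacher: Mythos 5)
Your negative case is sound and is essentially the paper's own argument: take the bad node of the symbolic graph of index $K\cdot|Q|$ from Lemma~\ref{lem:algo}, concretize it via Lemma~\ref{lemma-gsymb}, bound the number~$N$ of abstract processes by the number of nodes of the symbolic graph, and conclude that $N+K\cdot|Q|$ is a doubly-exponential negative cut-off. The gap is in the positive case. You route it through Lemma~\ref{lem:positive-cut-off}, which forces you to bound $\max_i\Msize{\state{\theta_i}}$, i.e.\ the minimal elements of the \emph{forward} reachability set $\Post^\ast(\upcl\set{\tuple{q_0,d_0}})$. You correctly flag this as the main obstacle, but the proposed resolution does not go through: Rackoff's theorem bounds the length of runs witnessing \emph{coverability} of an upward-closed target, and that is exactly why it bounds the minimal elements of $\Pre^\ast(\ConfsF)$ (hence~$K$). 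It says nothing about the minimal elements of $\Post^\ast$. Minimality of $\theta_i$ for $\preceq$ is a property of the whole reachability set (no coordinatewise-smaller configuration with the same support and register content is reachable from \emph{any} initial configuration), not of any particular run, so there is no ``non-reducible witnessing run'' to which a Rackoff-style length induction could be applied; and encoding ``support exactly~$S$'' into the control of a VASS turns the question into one about exact reachability, outside the scope of Rackoff's induction. As written, the double-exponential bound on $\max_i\Msize{\state{\theta_i}}$ is unproved, and the positive case is open.

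The paper avoids this entirely by a contrapositive through Lemma~\ref{lem:algo}: the second half of that proof shows that if almost-sure reachability fails for even \emph{one} population size $N>K\cdot|Q|$, then a bad node of the symbolic graph exists and hence there is a negative cut-off. Consequently, if the cut-off is positive, almost-sure reachability must hold for \emph{every} $N>K\cdot|Q|$, which makes $K\cdot|Q|$ itself a positive cut-off --- and $K\le L$ is already doubly-exponential by the proof of Theorem~\ref{thm:expspace-membership}. No bound on the $\theta_i$ is ever needed. You could repair your argument by substituting this observation for your appeal to Lemma~\ref{lem:positive-cut-off}; as it stands, the positive half of your proof rests on an unestablished bound.
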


\begin{proof}
First assume that the cut-off is negative. From~Lemma~\ref{lem:algo}, 
there is a
state~$\tuple{\mu,S,d}$ in the symbolic graph of index~$K\cdot |Q|$ that is
reachable from~$(q_0^{(K\cdot |Q|)},\{q_0\},d_0)$ and from which no
configutation containing $q_f$ is reachable. Applying Lemma~\ref{lemma-gsymb},
there exist multisets~$\delta_0=q_0^N$ and~$\delta$, with respective
supports $\{q_0\}$ and~$S$, such that $\tuple{\mu\Mcup\delta,d}$ is reachable
from~${\tuple{q_0^{K\cdot |Q|}\Mcup\delta_0,d_0}}$. 
Hence $N+K\cdot |Q|$ is a negative cut-off.

Let us evaluate the size of~$N$: this number is extracted from
the symbolic path $\tuple{q_0^{(K\cdot |Q|)},\{q_0\},d_0}\rightarrow^*
\tuple{\mu,S,d}$, which has length at most $|V|-1$ (where $V$ is the set of
states of the symbolic graph of index~$K\cdot |Q|$). 
Applying Lemma~\ref{lem:algo} $|V|-1$ times, increasing the size of the
concrete representation of $S$ by one each time, we~get $N\leq |V|$. Thus, both
$K\cdot |Q|$ and $N$ are doubly-exponential in~$\Msize\Prot$, 
thanks to the proof of Theorem~\ref{thm:expspace-membership}.

The proof of Lemma~\ref{lem:algo} also entails that if the distributed system
with some $N>K\cdot |Q|$ processes does not 
almost-surely reach the target state, then there is a negative cut-off. Hence
for there to be a positive cut-off, the target has to be almost-surely
reachable for all~$N>K\cdot |Q|$, which makes $K\cdot |Q|$ a
  (doubly-exponential) positive cut-off.
\end{proof}

\section{Conclusions and future works}

We have shown that in networks of identical finite-state automata
communicating (non-atomically) through a single register and equipped
with a fair stochastic scheduler, there always exists a cut-off on the
number of processes which either witnesses almost-sure reachability of
a specific control-state (positive cut-off) or its negation (negative
cut-off). This cut-off determinacy essentially relies on the
monotonicity induced by our model, which allows to use well-quasi
order techniques. By analyzing a well-chosen symbolic graph, one can
decide in \EXPSPACE whether that cut-off is positive, or negative, and
we proved this decision problem to be \PSPACE-hard. This approach allows us to deduce some
doubly-exponential bounds on the value of the cut-offs. Finally, we
gave an example of a network in which there is a negative cut-off,
which is exponential in the size of the underlying protocol. Note
however that no such lower-bound is known yet for positive cut-offs.

We have several further directions of research. First, it would be
nice to fill the gap between the \PSPACE lower bound and the \EXPSPACE
upper bound for deciding the nature of the cut-off.  
We would like also to investigate further atomic read{\slash}write
operations, which generate non-monotonic transition systems, but for
which we would like to decide whether there is a cut-off or not.
Finally, we believe that our techniques could be extended to more
general classes of properties, for instance, universal reachability
(all processes should enter a distinguished state), or liveness
properties.


\end{document}